\newtheorem{theorem}{Theorem}
\newtheorem{proposition}{Proposition}
\newtheorem*{proposition*}{Proposition}
\newcommand{\ap} {\alpha}
\newcommand{\bt} {\beta}
\newcommand{\Dt} {\Delta}
\newcommand{\ep} {\epsilon}
\def\R{{\mathbb R}}
\newcommand{\nd} {\noindent}
\begin{document}
\begin{titlepage}

\begin{center}

{\LARGE \bf Metastable periodic patterns in }
{\LARGE  \bf singularly perturbed state dependent delayed equations}

\vskip  .3truecm

{Xavier Pellegrin$^{c,}$\footnote{Corresponding author : pellegrin.xavier@ijm.univ-paris-diderot.fr, +33 (0)157278056.},
 C. Grotta-Ragazzo$^a$,  C.P. Malta$^b$, and  K. Pakdaman$^c$}
\\
\end{center}
\vskip 2mm
{\small
\begin{center}
a) Instituto de Matem\'atica e Estat\'\i stica\\
Universidade de S\~ao Paulo\\
05508-090, S\~ao Paulo, SP, BRASIL\\
\vskip 2mm
b) Instituto de F\'\i sica\\
Universidade de S\~ao Paulo\\
R. do Mat\~ao, Travessa R, 187\\
05508-090 S\~ao Paulo, BRASIL\\
\vskip 2mm
c) Univ Paris Diderot, Sorbonne Paris Cit\'e,\\ 
IJM, UMR 7592 CNRS,\\ 
F-75205 Paris Cedex 13, France\\
\vskip 2mm
\end{center}
}
\vskip .3truecm

\begin{abstract}
  We consider the scalar delayed differential equation $\ep\dot
  x(t)=-x(t)+f(x(t-r))$, where $\ep>0$, $r=r(x,\ep)$ and $f$
  represents either a positive feedback $df/dx>0$ or a negative
  feedback $df/dx<0$.  
When the delay is a constant, i.e.
  $r(x,\ep)=1$, this equation admits metastable rapidly oscillating
  solutions that are transients whose duration is
  of order $\exp(c/\ep)$, for some $c>0$. 
In this paper we
  investigate whether this metastable behavior persists when the delay
  $r(x,\ep)$ depends non trivially on the state variable $x$.  
Our
  conclusion is that for negative feedback, the persistence of the
  metastable behavior depends only on the way $r(x,\ep)$ depends on
  $\ep$ and not on the feedback $f$. 
In contrast, for positive
  feedback, for metastable solutions to exist it is further required
  that the feedback $f$ is an odd function and the delay $r(x,\ep)$ is
  an even function. 
Our analysis hinges upon the introduction of state
  dependent transtion layer equations that describe the profiles of
  the transient oscillations.  
One novel result is that state
  dependent delays may lead to metastable dynamics in equations that
  cannot support such regimes when the delay is constant.
\end{abstract}

\vskip .3truecm

\nd {\bf Key words:} metastability, state-dependent delayed differential
equation,
singular perturbation.

\vfill \hrule \smallskip \nd {\small CGR is partially supported by
  CNPq (Brazil) grant 305089/2009-9 , CPM is partially supported by
  CNPq (Brazil) grant 311022/2009-0. }

 \end{titlepage}

\section{Introduction}
\label{intro}

\par State dependent delay differential equations (DDEs) of the form
\begin{equation}
\ep {dx\over dt}(t)=\ep \dot x(t)=-x(t)+f(x(t-r)), \quad {\rm where}\quad
r=r(x,\ep)= r_0 + \tilde r(x,\ep) \; ,
\label{eqmain}
\end{equation}
appear as models in economics, physics and biology
\cite{JB-MM_89,MM_89,AL-JM_89,AL-JM_89B,OA-ES-AF_99,CF-MM_09}.  Many
studies have dealt with the asymptotic dynamics of these equations,
establishing existence, stability and profiles of the so called slowly
periodic oscillations when $f$ represents a negative feedback (see,
for instance,
\cite{alt,arino-hbid,krisztin-arino,smith,MPNP,walther,mallet1,MPN4,MPN5,Nussbaum_2003}).
There are also results on existence and stability of periodic
solutions, and on the convergence of most solutions to equilibria in
the case of monotone positive feedback \cite{kris,polner}. For a
review on DDEs with state dependent delays see \cite{Hartung_2006}.
In contrast with these previous studies, in addition to the long term
behavior of solutions, we are also interested in the transient
dynamics of Eq.  (\ref{eqmain}): we explore conditions under which the
system displays metastable transient oscillations prior to convergence
to its asymptotic state.  Typically, in such a regime, the system
engages in seemingly stable and sustained oscillations for a time of
order $\exp(c/\epsilon)$ for some constant $c$, before eventually
converging to a stable equilibrium or a stable periodic solution.

\par It is well known that delayed feedback can lead to oscillations in a
system that would otherwise remain at rest. Much focus has been on
delay induced sustained undamped oscillations, such as periodic ones.
However, the delay can also produce transient oscillations, that end
up vanishing with time. While these had been reported in several
studies (see for instance \cite{shar,politi2,politi,pre_ring} and
references therein), for a long time they had not been considered as a topic of
investigation on their own. In \cite{pre_ring}, the term DITO was
coined for this phenomenon both as an acronym for ``delay induced
transient oscillation'' and as a means to emphasize its repetitive
nature.  In the same paper, it was argued that long lasting DITOs
could strongly alter the dynamics of the system and, for instance,
interfere with information retrieval in neural networks. This work
together with the ones cited above initiated intrinsic interest in the
phenomenon. Since these early studies, DITOs have been reported in
other systems, such as for instance \cite{Milton_2010,Milton_2011}.
In support of claims in \cite{shar} and numerical
explorations in \cite{pre_ring}, through analytical treatment of a
specific system of DDEs with piecewise constant feedback, it was
shown, in \cite{pre_expo}, that some DITOs could persist for durations
of the order $\exp(c/\epsilon)$. Such long transients would outlast
any observation window when $\epsilon$ is small enough and would be
undistinguishable from (nearly) periodic solutions. In analogy with
long lasting transient oscillations in partial differential equations,
DITOs with such exponential lifetimes are referred to as metastable
\cite{pre}.

\par Metastable oscillations have been thoroughly studied in many systems
such as scalar partial differential equations \cite{cp,fh}.  However
they have received far less attention in DDEs. Here we examine their
occurrence in DDE (\ref{eqmain}).  Our previous works characterized the
conditions for metastability in equations with constant delays for
both positive and negative feedbacks \cite{pre,jdde}. It further
revealed a remarkable difference in metastability between the positive
and negative feedback equations, that had escaped earlier reports and
explorations.  Indeed, a positive feedback DDE presents metastability
only when the feedback $f$ satisfies a special symmetry property that
holds, for instance, when $f$ is an odd function.  This condition is
absent in DDEs with negative feedback 
(see also \cite{niz1,niz2} for related results).
One consequence of this result is that not all DITOs are metastable. 

\par All these previous works deal with
constant delays. As far as we
know, the present paper is the first one to investigate and show the existence
of metastability for state-dependent delay equations.
Our goal in this paper is to study the transient dynamics of Eq.
(\ref{eqmain}) when the parameter $\ep>0$ is small or, equivalently,
$r_0$ is large. In previous singular perturbation analyzes of DDEs
with state dependent delay \cite{MPN1,MPN2,MPN3}, the small parameter
$\ep$ appears only in the left hand side of the equation. Here, the
novelty resides in the fact that $\ep$ appears in Eq. (\ref{eqmain})
in two different places: multiplying the time derivative of $x$, which
characterizes a singular perturbation problem, and as an argument of
$\tilde r$.  One major contribution of this work is to highlight the fact
that the dynamics of Eq.~(\ref{eqmain}) strongly depend on the way $\tilde r$
depends on $\ep$.

\medskip
\par This paper is organized as follows. In section 2 we present and
discuss some previous results on the existence of metastable solutions
in the context of DDEs with constant delays.  These previous results
indicate that there are two essential ingredients for the occurrence of
metastability. The first is the existence of rapidly oscillating
periodic solutions (in a sense to be defined in section
\ref{metastability}) as the singular parameter $\ep$ tends to zero.
The second is that the shapes of these oscillations tend to
stereotyped ``square-wave'' profiles as $\epsilon \rightarrow 0$.
These profiles are heteroclinic solutions of certain transition layer
equations.  A key consequence of this is that metastability can be
characterized through the analysis of these transition layer
equations.  In sections \ref{sec:periodic_sol} and \ref{sec_tle} we
examine the same two ingredients namely the existence of rapidly
oscillating periodic solutions and transition layer equations for
state dependent delay equations.  More precisely, in section
\ref{sec:periodic_sol}, using a Hopf bifurcation theorem, we examine
the existence of rapidly oscillating periodic solutions for
Eq. (\ref{eqmain}), regardless of the choice of $R$ and its scaling
with the small parameter $\ep$ (details of the computations are
presented in the appendix A).  Then we link the amplitude of rapid
periodic oscillations to the possible existence or absence of
metastability in some of the state-dependent DDEs under study.  In
section \ref{sec_tle} we introduce transition layer equations to
describe metastability (more details are given in Appendix B),
similarly to what has been done for DDEs with constant delay
\cite{chowmp,m-pn,jdde}.  We show that symmetry conditions on
solutions of these equations can be used to characterize the condition
on the feedback function $f$, and the delay function $r$, required for
the existence of metastable oscillating solutions.  One novel
observation, specific to equations with state dependent delay, is that
the existence of metastable solutions may depend on the behavior of the
function $r$ when $\epsilon \rightarrow 0$.
In section \ref{section_num}, we present systematic numerical
explorations of the transient dynamics of Eq. (\ref{eqmain}).  Section
\ref{sec_conclusion} contains a discussion and a summary of the
results.

\medskip
\par Prior to section \ref{metastability}, we go
over some assumptions and notations.  In most applications the
functions $f$ and $r$ are differentiable in $x$, here this is always assumed.
Furthermore, in accordance with our previous results \cite{pre,jdde},
we assume that $f$ is either a monotonic increasing (positive
feedback) or a monotonic decreasing (negative feedback) function
satisfying the following hypotheses.
\begin{itemize}
\item[] {\bf Positive Feedback} \\
   $f'(x) \ge 0$,  $f(0)=0, f'(0)>1$, and
  there exist $a>0$, $b>0$, such that $f(-a)=-a$, $f(b)=b$,
  $0<f'(-a)<1$, $0<f'(b)<1$, and
  $f(x)\neq x$ for $x \in (-a,0) \cup (0,b)$. 
\item[] {\bf Negative Feedback} \\
  $f'(x) \le 0$, $f(0)=0, f'(0)<-1$, and
  there exist $a>0$, $b>0$, such that $f(-a)= b$, $f(b)=-a$, $0 <
  f'(-a) f'(b) <1$; $|f(f(x))| > x$ for $ x \in (-a,0) \cup (0,b)$,
  and  $|f(f(x))| < x$ for $ x \in (-\infty,-a) \cup (b,\infty)$. 
\end{itemize}
In the following, a positive or negative feedback $f$ will be called
symmetric if it is an odd function.  Under these hypotheses, both the
dynamics of the map $f:\mathbb{R} \to \mathbb{R}$, and the DDE Eq.
(\ref{eqmain}), with constant delay are well understood. In
particular, the map $f$ in positive feedback case has only three fixed
points: $x=0$, unstable, and $x= -a$, $x=b$, stable. For negative
feedback, the map has a single unstable fixed point $x=0$, and a
single attracting 2-periodic orbit $\{-a,b\}$, $R(x)=0$ corresponding
to constant delay.
Without loss of generality, one can use the time scaling $s =
\frac{t}{r_0}$, and consider Eq. (\ref{eqmain}) with $r_0= 1$.  In the
following we shall assume that the delay function is
\begin{equation}\label{eq_delay_eta}
r(x,\ep)=1+ \eta(\ep) R(x),
\end{equation}
where $\eta$ is a non negative and smooth function on $\epsilon \geq
0$.  In this way, different scalings between the delay $r$ and the
parameter $\epsilon$ can be analyzed by changing the function $\eta$.
Typically $R$ will be chosen as a smooth nonnegative function.

\section{Metastability in DDEs with constant delay}
\label{metastability}

Metastability is a concept that appears in several branches of physics
and mathematics. Schematically, a
state is called metastable if it is transient, that is, it is
eventually transformed into another one, but this transformation is on
such a slow time scale that it is not perceived in normal observation
windows. One of the first systems of relevance to the present work, in which
metastability had a full mathematical treatment, is that of scalar
parabolic equations \cite{cp,fh}. Metastable solutions of DDEs
(\ref{eqmain}) with constant delay $r(x,\ep)=1$ share a number of
features with those of the partial differential equations
\cite{pre,jdde}. In this section, we provide an overview of metastable
solutions in scalar DDEs with constant delay, 
that will serve as a basis for the
comparison and analysis of the case of state dependent delays.

Throughout the remainder of this section, we refer only to DDEs
(\ref{eqmain}) with constant delay set to one.  Roughly speaking,
metastable solutions of these equations are trajectories that evolve
close to unstable manifolds of unstable periodic orbits, 
and they seem to be periodic, while in fact they are endowed with a drift 
of order $\exp(c/\ep)$. 
In section \ref{subsec_rap} we describe the relationship between rapidly
oscillating periodic orbits and metastable oscillations, and in section
\ref{subsec_drift} we show how transition layer equations allow to
quantify the slow drift of such oscillations.

\subsection{Metastability and rapidly oscillating periodic orbits}
\label{subsec_rap}

To make the description more concrete, consider, for instance, the
positive feedback case $f(x)= \frac12 \arctan (5x)$.  For any value of
$\ep>0$, Eq (\ref{eqmain}) has exactly three equilibria, $x=-a$,
$x=0$, and $x=a$, where $a$ is the positive solution of $x=
\frac12 \arctan(5x)$.  For all values of $\ep>0$, the equilibria $x=-a$
and $x=+a$ are locally asymptotically stable and $x=0$ is unstable.
For all $\ep$ sufficiently large the unstable manifold of $x=0$ is
one-dimensional and as $\ep$ decreases, $x=0$ undergoes an infinite
number of Hopf bifurcations at $\ep_1>\ep_2>\ldots>0$ such that the
dimension of the unstable manifold of $x=0$ at $\ep\in
(\ep_{n},\ep_{n+1})$ becomes $2n+1$.  Let $x_{n}$ denote the periodic
solution that bifurcates from $x=0$ at $\ep=\ep_n$.  If $x_n(t)=0$
then $x_n(t+s)$ has exactly $2n$ zeroes for $s\in [-1,0]$.  The
branches of periodic orbits that appear at these successive Hopf
bifurcations can be extended up to $\ep=0$, 
the period converges to $r=r_0=1$ and the amplitude converges to $a >0$
\cite{arino_benkhalti}, and the oscillations tend to a
square-wave-like shape when $\epsilon \rightarrow 0$.  So, Eq.
(\ref{eqmain}) admits periodic solutions with a large number of zeroes
in a time interval of length one, provided that $\ep>0$ is
sufficiently small.  Given $\ep>0$ the global attractor of Eq.
(\ref{eqmain}) consists of its set of equilibria and periodic orbits,
and their finite dimensional unstable manifolds.  These solutions are
organized in a peculiar way: the global attractor admits a Morse
decomposition, with each Morse set containing a periodic orbit or an
equilibrium. The direction of the flow on this decomposition is such
that the number of zeroes of solutions is a non increasing function of
time \cite{arino_seguier,polner}.  A similar description of the
periodic orbits, their branches, their shape, and the organization of
the trajectories on the global attractor holds for Eq. (\ref{eqmain})
with negative feedback \cite{mallet1,m-pn}.

Now that we have depicted the long term dynamics of DDEs with positive
and negative feedback, we can describe the way metastable solutions
appear in Eq. (\ref{eqmain}).  For $\ep>0$ small, an initial condition
$\varphi:[-1,0]\to\R$ to Eq.  (\ref{eqmain}) with $2n$ zeroes gives
rise to a solution $x_t(\cdot,\varphi,\ep):[-1,0]\to\R$ that after a
time of order one is pointwise close to a function in the unstable
manifold of $x_{n}$, that has a square-wave-like shape close to that
of $x_n$. The dynamics of $x_n$ is
approximately periodically oscillatory, but the slow motion of the
solution along the unstable manifold of $x_n$ eventually annihilates a
pair of zeroes of the solution, 
which takes a time of order $\exp (c/\ep)$.  After the
annihilation of the two zeroes the solution drifts along the unstable
manifold of another periodic solution $x_{n-1}$, and this process repeats
itself until the solution eventually approaches one of the two stable
equilibria $x=-a$ or $x=a$ for positive feedback, or a slowly
oscillating periodic orbit for negative feedback.  
Metastability means here that 
the solutions have a (fast) oscillatory transient time 
that grows as $\exp (c/\ep)$ when $\ep\to 0$.

\subsection{Metastability and transition layers}
\label{subsec_drift}

Metastability not only depends on the existence of periodic solutions
(i.e. the qualitative geometry of the phase portrait), but also 
on the quantitative dynamics along their unstable manifolds. 
Metastability happens only when the rapidly oscillating
solutions are square-wave-like, in which case their jumps
have been understood as transition layer phenomena, and 
described and analyzed using transition layer equations.

Suppose that $f$ is a positive feedback and that $x(t)$ is an
oscillatory metastable solution of (\ref{eqmain}), with $r=1$, that
jumps from $x(t)\approx b>0$, for $t<0$, to $x(t)\approx -a<0$, for
$t>0$, with $x(0)=0$.  The approximately periodic behavior of the
metastable solution implies that $x(t)\approx x(t+1+\rho\ep)$.  When
$\epsilon$ is small, this and Eq. (\ref{eqmain}) imply
\[ \ep {dx\over dt}(t)=-x(t)+
f\{x(t- 1 )\}
\approx -x(t)+ f\{x(t+\rho\ep) \}.   \]
Rescaling $x(t)$ as
 $\phi^-(t)= x(\epsilon t)$, the above equation for $x(t)$ implies
that $\phi^-$ must  satisfy the transition layer equation 
\begin{equation}
\label{eqTLE--}
\dot  \phi^-(t)  = - \phi^-(t) + f( \phi^-( t + \rho^-  ) ),
\end{equation}
where $\rho^->0$ is an unknown constant and the function $\phi^-$ must
satisfy the boundary conditions $\lim_{t\to -\infty}\phi^-(t) = b$ and
$\lim_{t\to \infty}\phi^-(t) = -a$. 
The same ideas apply to a jump 
from $x(t)\approx b>0$ for $t<0$, to $x(t)\approx -a<0$ for $t>0$, 
and leads to the existence
of an increasing transition layer solution, i.e. a solution to
\begin{equation}\label{eqTLE++}
\dot  \phi^+(t)  = - \phi^+(t) + f( \phi^+( t + \rho^+  ) )
\; \; \; \; 
\end{equation}
where $\rho^+>0$ is an unknown constant and the function $\phi^+$ must
satisfy the boundary conditions $\lim_{t\to -\infty}\phi^+(t)=-a$ and
$\lim_{t\to \infty}\phi^+(t)=b$. 

The constants $\rho^+$ and $\rho^-$ are drift velocities of ascending
and descending sign-changes (zeroes) of an oscillatory solution. 
If an oscillatory solution $x(t)$
of Eq. (\ref{eqmain}) satisfies $x(t_0^+)=0$ and $x^\prime(t_0^+)>0$,
then one expects to find a time $t_1^+ \approx t_0^+ + 1 + \epsilon
\rho^+$ such that $x(t_1^+)=0$ and $x^\prime(t_1^+)>0$; if
$x(t_0^-)=0$ and $x^\prime(t_0^-)<0$, then one expects to find a time
$t_1^- \approx t_0^- + 1 + \epsilon \rho^-$ such that $x(t_1^-)=0$ and
$x^\prime(t_1^-)<0$. This suggest that the symmetry condition
$\rho^+=\rho^-$ should be associated with metastability of oscillating
solutions.  Indeed,  exponential duration of
oscillatory transients is proven in \cite{jdde} for scalar DDEs, with monotone
positive feedback, by an estimate of the type $ \vert t_1^\pm -
(t_0^\pm + 1 + \epsilon \rho^\pm) \vert \leq e^{c/\epsilon}$, so that
when $\rho^+=\rho^-$ oscillatory solutions are close to a $1 +
\epsilon \rho$ periodic solution up to an exponential order.

Finally, we discuss transition layer for negative feedback function.
Transient oscillations are square-wave-like and they have an
approximate period $T \approx 2 + \epsilon C$.  If an oscillatory
solution $x(t)$ of Eq. (\ref{eqmain}) satisfies $x(t_0)=0$ and
$x^\prime(t_0)>0$, then one expects to find a time $t_1 \approx t_0 +
1 + \epsilon \rho^+$ such that $x(t_1)=0$ and $x^\prime(t_1)<0$, and
then a time $t_2 \approx t_1 + 1 + \epsilon \rho^-$ such that
$x(t_2)=0$ and $x^\prime(t_2)>0$.  So the transition layer equations
for the increasing and decreasing transition layer solutions are
coupled:
\begin{equation}
\left \{
\begin{array}{ccc}
\dot \phi^+(t) &=& - \phi^+(t) + f( \phi^-[ t  + \rho^- ] ) \; , \\
\dot \phi^-(t) &=& - \phi^-(t) + f( \phi^+[ t  + \rho^+ ] ) \; , \\
\end{array}
\right .
\label{eqTLEneg_c}
\end{equation}
where $\phi^+$ is increasing and $\phi^-$ is decreasing on
$\mathbb{R}$, with $\lim_{t\to -\infty}\phi^+(t) = -a$, $\lim_{t\to
  \infty}\phi^+(t) = b$, $\lim_{t\to -\infty}\phi^-(t) = b$ and
$\lim_{t\to \infty}\phi^-(t) = -a$, with $\phi^+(0) = \phi^-(0) = 0$ and
$\rho^\pm$ are unknown real constants.
For negative feedbacks, given an oscillatory solution $x(t)$, an
(ascending) zero $x(t_0)=0$ with $x^\prime(t_0) >0$ gives rise to a
(descending) zero $x(t_1)=0$ with $x^\prime(t_1) >0$, and then to
another ascending zero $x(t_2)=0$ with $x^\prime(t_2) >0$, with $t_1
\approx t_0 + 1 + \epsilon \rho^+$ and $t_2 \approx t_1 + 1 + \epsilon
\rho^-$. So, in first order, the drift speeds of ascending and
descending zeros are identical and equal to $2 + \epsilon(\rho^- +
\rho^+)$.  Hence the symmetry condition, equivalent to the condition
$\rho^+=\rho^-$ in the positive feedback case, is here $\rho^+ + \rho
^- = \rho^- + \rho^+$, and it is obviously always true, irrespective
of $\rho^+$ and $\rho^-$ \cite{jdde}.

In summary, the analysis using transient layer equations reveals and
explains that metastability manifests itself in different ways in
positive and negative feedback systems. Indeed, positive feedback DDEs
present metastability only when a very special symmetry property of
the transition layer problems (\ref{eqTLE--}) and (\ref{eqTLE++}) is
satisfied, which holds for instance when $f$ is an odd function, while
there is no such restriction in the case of negative feedback
equations (\ref{eqTLEneg_c})~\cite{pre,jdde}.

\section{Existence and amplitude of periodic solutions}
\label{sec:periodic_sol}

\par Based on literature results and novel results that follow in
this section, we conjecture that, provided the delay satisfies a
number of classical technical assumptions
\cite{MPN1,MPN2,MPN3,krisztin-arino,Hartung_2006,eichmann,bartha_monotone,bartha_periodic},
the geometric organization of the phase portrait of DDEs with state
dependent delays and monotone feedback is similar to that of equations
with constant delays.  Namely our three conjectures are:
(i) Equation (\ref{eqmain}), with $\tilde{r} (x,0) = 0$, sustains
branches of periodic solutions, that appear at the $0$ equilibrium by
successive Hopf bifurcations and exist until $\epsilon \rightarrow 0$,
with amplitudes and periods that converge to some non zero limits.
(ii) A Poincar\'e-Bendixson like theorem holds, and as a consequence
the global attractor of (\ref{eqmain}) is composed of equilibria,
periodic solutions, and their unstable manifolds.  (iii) The global
attractor of (\ref{eqmain}) has a Morse decomposition, it is ordered
by a discrete Lyapunov functional, and it is composed only of
equilibria of (\ref{eqmain}), the periodic solutions described in the
first conjecture, and connections from more-rapidly oscillating
periodic solutions to less-rapidly oscillating periodic solutions.
\par 

\medskip
\par With few hypotheses on the feedback $f$, we have shown that a
local Hopf-bifurcation theorem of \cite{eichmann} applies to
(\ref{eqmain}).  This gives an essential element in the proof of
conjecture (i): for some decreasing sequence $\epsilon_k \rightarrow
0$, a Hopf bifurcation occurs at the zero equilibrium of
(\ref{eqmain}) each time that $\epsilon$ crosses one $\epsilon_k$,
which gives rise to an oscillating periodic solution with $2k$ zeroes
per period.  The corresponding theorem is rigorously stated in section
\ref{sub_hopf} and proved in appendix \ref{subsec_hopf}.
Admitting the existence of branches corresponding to these periodic
solutions, in section \ref{subsub_a}, we show that if we have
$\eta(\epsilon) = c \epsilon + o(\epsilon)$ in (\ref{eq_delay_eta})
for some $0< c < + \infty$, then the amplitudes of all periodic
solutions converge to the same non zero limit when $\epsilon
\rightarrow 0$.  On the contrary, following a proposition of
\cite{MPN2}, we show in section \ref{subsub_b} that when $\eta(0)\neq
0$, the amplitude of periodic solutions with $2k$ zeroes per period
(along the branch that appears at $\epsilon_k$), is bounded from above
by some constant $C_k$ (independent of $\epsilon$) such that $C_k
\underset{k \rightarrow + \infty}{\longrightarrow } 0$.

\subsection{Sequence of Hopf-Bifurcations for Eq. (\ref{eqmain})}
\label{sub_hopf}

The PhD thesis of M. Eichmann \cite{eichmann} contains a local
Hopf-bifurcation theorem for state-dependent DDEs which implies the
following. 

\begin{theorem} \label{thm_hopf} Suppose that $f$ is $C^2(\mathbb{R},
  \mathbb{R})$, $r : ]0,1[ \times C^0([-M,0], \mathbb{R}) \rightarrow
  \mathbb{R}$ is $C^1$, and $r : ]0,1[ \times C^1([-M,0], \mathbb{R})
  \rightarrow \mathbb{R}$ is $C^2$, for $M>0$.  Suppose that $f(0) =
  0$ and $\vert f^\prime (0) \vert >1$.  Then there is a decreasing
  sequence $(\epsilon_k)_{k \in \mathbb{N}}$ converging to zero, such
  that for any $k \geq 0$ there is an open interval $]- \eta_k ,
  \eta_k[$ and $C^1$ mappings $y^* : ]- \eta_k , \eta_k[ \rightarrow
  C^{1}([-M,0], \mathbb{R}) $, $\epsilon^* : ]- \eta_k , \eta_k[
  \rightarrow ]0,1[ $ and $w^* : ]- \eta_k , \eta_k[ \rightarrow
  \mathbb{R}$, with $y^*(0)=x^*=0$, $\epsilon^*(0)= \epsilon_k$ and
  $w^*(0) = \beta_k = {\rm {Im}}(\lambda_k)$ such that for any $u \in ]-
  \eta_k, \eta_k[$ there is a periodic solution to
$$ \epsilon^*(u) x^\prime(t) = - x(t) + f(x(t- r(\epsilon^*(u), x_t))) $$
with initial condition $x_0 = y^*(u)$ and with frequency
$\frac{w^*(u)}{2 \pi}$.
\end{theorem}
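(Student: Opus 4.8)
The plan is to deduce Theorem~\ref{thm_hopf} as a direct application of the abstract local Hopf-bifurcation theorem of Eichmann~\cite{eichmann}, so the real work is \emph{not} to reprove a Hopf theorem from scratch but to (a) put Eq.~(\ref{eqmain}) into the exact functional-analytic framework required by that theorem, and (b) verify that the linearization at the zero equilibrium produces an infinite sequence of simple purely-imaginary eigenvalues crossing the imaginary axis transversally as $\epsilon$ decreases through a sequence $\epsilon_k\to 0$. First I would rewrite the state-dependent DDE as an autonomous equation on the phase space $C^1([-M,0],\mathbb{R})$ with the solution operator depending on the functional $r(\epsilon,x_t)$, and check the smoothness hypotheses: the assumptions that $f\in C^2$ and that $r$ has the stated $C^1$/$C^2$ regularity on the two scales of function spaces are precisely the technical conditions under which the delayed right-hand side is continuously differentiable in the appropriate sense (this is the well-known loss-of-derivative issue for state-dependent delays, handled in~\cite{eichmann,Hartung_2006}).

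The heart of the argument is the spectral analysis of the linearization. Since $x^*=0$ is an equilibrium and $r$ enters only through the delayed argument, the linearized equation at $0$ is the \emph{constant-delay} linear DDE
\begin{equation}
\epsilon\,\dot y(t) = -y(t) + f'(0)\, y(t - r_0),
\label{eqlin}
\end{equation}
with $r_0=1$, because the state-dependence of $r$ contributes only higher-order terms at the equilibrium (the variation of $r$ multiplies $\dot x$, which vanishes at $0$). The characteristic equation is therefore $\epsilon\lambda = -1 + f'(0)\,e^{-\lambda}$. I would then locate purely imaginary roots $\lambda = i\beta$: separating real and imaginary parts yields $\cos\beta = 1/f'(0)$ (using $|f'(0)|>1$ so this is solvable) together with $\epsilon\beta = -f'(0)\sin\beta$, which pins down a discrete decreasing sequence $\epsilon_k\to 0$ and associated frequencies $\beta_k$. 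The condition $|f'(0)|>1$ is exactly what guarantees infinitely many such crossings accumulating at $\epsilon=0$, matching the $2k$-zeroes-per-period indexing.

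Next I would verify the two nondegeneracy conditions that Eichmann's theorem demands: simplicity of the crossing eigenvalue $i\beta_k$ (no other root equals a real multiple of it, and it is algebraically simple as a root of the characteristic function) and transversality, i.e.\ that $\mathrm{Re}\,\dfrac{d\lambda}{d\epsilon}\big|_{\epsilon_k}\neq 0$. Transversality I would obtain by implicit differentiation of the characteristic equation along the root branch $\lambda(\epsilon)$, giving an explicit expression whose real part is nonzero thanks to the same bounds on $f'(0)$ and $\beta_k$. With hypotheses and nondegeneracy verified, Eichmann's theorem furnishes the $C^1$ branches $y^*,\epsilon^*,w^*$ with the stated initial values $\epsilon^*(0)=\epsilon_k$, $w^*(0)=\beta_k=\mathrm{Im}(\lambda_k)$, and the bifurcating periodic solutions, which completes the proof.

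I expect the main obstacle to be the \emph{regularity bookkeeping} rather than the spectral computation: one must check that $r$, viewed as a map on the two different spaces $C^0([-M,0],\mathbb{R})$ and $C^1([-M,0],\mathbb{R})$, satisfies exactly the $C^1$ and $C^2$ hypotheses of~\cite{eichmann} so that the semiflow is smooth enough for the Hopf theorem to apply. The eigenvalue analysis is essentially the classical constant-delay computation and should be routine; the delicate point is confirming that the state-dependence genuinely drops out of the linearization at $0$ and that the abstract framework's hypotheses are met with $M$ chosen large enough to bound the delay $r=1+\eta(\epsilon)R(x)$ uniformly on the relevant neighborhood.
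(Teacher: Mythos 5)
Your proposal follows essentially the same route as the paper's appendix proof: cast Eq.~(\ref{eqmain}) as $x'(t)=g(\epsilon,x_t)$, verify Eichmann's regularity hypotheses via the $C^1$/$C^2$ assumptions on $r$ over the two scales $C^0$ and $C^1$ (the paper does this by explicitly computing the first- and second-order Fr\'echet derivatives of $g$ in Propositions~\ref{prop1} and~\ref{prop2}), and verify the spectral hypotheses by noting that the state-dependence drops out of the linearization at $0$ so that the classical constant-delay characteristic equation $1+\epsilon\lambda=f'(0)e^{-\lambda}$ yields the sequence $\epsilon_k\to 0$, with simplicity and transversality obtained by the implicit function theorem exactly as in the paper. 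The plan is correct; the only difference is that you defer the derivative bookkeeping that occupies most of the paper's appendix rather than carrying it out.
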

We remark that the delay function $r(\ep, x_t)$ in the theorem above
is more general than that in Eq.  (\ref{eq_delay_eta}), and that the
function $f$ does not have to be positive or negative feedback.

The theorem is proved in appendix \ref{subsec_hopf}. Here we first
justify the existence of the critical values $\ep_k$, and show that
the associated frequencies $\frac {w_k}{2\pi} = \frac {\beta_k}{2\pi}
\underset{k \to \infty}{\longrightarrow} + \infty $.
Suppose that $f(0)=0$ and that $|f^\prime(0)|>1$, and let  
$x^*=0$  be the unstable steady solution of Eq. (\ref{eqmain}).
The linearization of  Eq. (\ref{eqmain}) at  $x^*=0$ is
$$ \epsilon y'(t) = - y(t) + f^\prime(0) y(t- r_0)  $$
 where $r_0 = r(0)= r(x^*)=1$.
The characteristic equation associated to this linearized equation 
is $1 + \epsilon \lambda = f^\prime(0) e^{- \lambda}$, or, equivalently, with 
$\lambda = \alpha + i \beta \in \mathbb{C}$
$$ \left \{
\begin{array}{ccc}
1 + \epsilon \alpha &=& f^\prime(0) e^{-\alpha} \cos(\beta) \\
   \epsilon \beta &=& - f^\prime(0) e^{- \alpha} \sin{\beta},
 \end{array} \right . $$
This is the same characteristic equation as for the constant-delay equation, and there exists  a sequence 
$\epsilon_k \underset{k \rightarrow + \infty}{\longrightarrow} 0$ such
that for each $\ep=\ep_k$ 
the characteristic equation has a single pair of solutions on the 
imaginary axis
$\lambda =\pm i \beta_k$, $\bt_k>0$. Moreover,  
$\bt_k\to\infty$ as $k\to\infty$. 
Since $\bt_k$ is the angular frequency of the periodic orbit unfolded
at $\ep_k$, theorem \ref{thm_hopf} implies the existence of rapidly
oscillating periodic solutions of Eq. (\ref{eqmain}) as $\epsilon$
tends to zero.

\subsection{Case $\eta(0)=0$ with $0< \eta^\prime(0) < + \infty$ }
\label{subsub_a}

With the previous assumptions on $\eta$, one has
$r(x,\epsilon)= 1 + \eta(\epsilon) R(x) \sim 1 + \epsilon
\eta^\prime(0) R(x) $ when $\epsilon \rightarrow 0$. Without loss of
generality, we assume $\eta^\prime(0) = 1$, so that the delay is $r(x,
\epsilon) = 1 + \epsilon R(x)$ in Eq. (\ref{eqmain}).

Suppose that there is an $\epsilon_0$ for which Eq. (\ref{eqmain}) has
a periodic solution $x_0(t)$ with period $T_0$.  Given an integer
$n>0$, Eq. (\ref{eqmain}) and the periodicity of $x_0(t)$ imply that
\[
\epsilon_0 \dot x_0(t)=-x_0(t)+f(x_0\{t-1-nT_0-\epsilon_0 R[x_0(t)]\})
\; .
\]
Then, rescaling time as $\hat t= t/(1+nT_0)$, we obtain that $x_n(
t)=x_0[(1+nT_0)\hat t]$ satisfies the equation
\[
{\ep_0\over 1+nT_0} {d x_n\over d \hat t}(\hat t)=
-x_n(\hat t)+f\left(x_n\left\{\hat t-1-{\ep_0\over  1+nT_0}  R[x_n(\hat t)]\right\}
\right).
\]
Therefore, for $\ep=\ep_n= \ep_0/(1+nT_0)$ Eq. (\ref{eqmain}) admits
the periodic solution $x_n(t)=x_0[t(1+nT_0)]$ with period
$T_n=T_0/(1+nT_0)$. 
It shows that the branches of rapidly
oscillating periodic solutions can be obtained from the first branch
of periodic solutions. 
Hence, assuming that the first branch exist up to $\epsilon \rightarrow 0$, 
it follows that all the other branches exist,
and the amplitude of periodic solutions along all theses branches 
converge to the same positive limit when $\epsilon \rightarrow 0$. 

In this sense, Eq. (\ref{eqmain}) sustains ``large amplitude'' rapidly
oscillating periodic solutions as $\ep$ tends to zero when the delay
function is of the form $r(\epsilon, x) = 1 + \epsilon R(x)$ or
$r(\epsilon, x) = 1 + \epsilon \eta^\prime(0) R(x)$.  As mentioned in
section \ref{metastability}, this is one of the signatures of the
existence of metastable solutions in the case of DDEs with constant
delay.  So this strengthens the similarity of DDEs with state
dependent delay and DDEs with constant delay, thus giving support to
the possibility of metastability in the case $\eta(0)=0$ with $0<
\eta^\prime(0) < + \infty$.

\subsection{Case $\eta(0)\neq 0$}
\label{subsub_b}

The situation for $\eta(0) \neq 0$ is different from the one depicted
above, and this can be understood thanks to proposition 3.4 in
\cite{MPN2}, which precludes the existence of large amplitude rapidly
oscillating periodic solutions as $\ep\to 0$.  First, we recall that
proposition in \cite{MPN2}, and then we discuss its consequences in
terms of the amplitude of the periodic solutions.

\begin{proposition}[Mallet-Paret, Nussbaum]
\label{propMPN}
Suppose that the feedback $f$ is $C^0(\mathbb{R},\mathbb{R})$ 
and that the delay
function $r(\epsilon, x)$ is Lipschitz regular in $x$.
Let $x(t)$ satisfy equation {\rm (\ref{eqmain})} for $t\in\R$ 
for some value of $\epsilon$, and suppose
there exist $a_0<a_1$ and $t_0<t_1<t_2<t_3$ such that
$x(t_i)\le a_0$ for  even $i$, and $x(t_i)\ge a_1$ for odd $i$, for
$0\le i\le 3$. Then
\[
\max_{[a_0,a_1]}r(\cdot,\ep)-\min_{[a_0,a_1]}r(\cdot,\ep)\le 3(t_3-t_0).
\]
\end{proposition}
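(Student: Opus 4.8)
The plan is to work with the retarded argument $\sigma(t) = t - r(x(t),\ep)$, which is continuous because $x \in C^1(\R)$ (a consequence of $f\in C^0$ together with the equation) and $r$ is Lipschitz in $x$. Write $M = \max_{[a_0,a_1]} r(\cdot,\ep)$ and $m = \min_{[a_0,a_1]} r(\cdot,\ep)$, attained at levels $\xi_M,\xi_m \in [a_0,a_1]$; the goal reduces to showing $M - m \le 3(t_3-t_0)$. A first observation that orients the whole argument is that the equation must genuinely enter: the purely kinematic statement (for an \emph{arbitrary} continuous $x$ and arbitrary Lipschitz $r$) is false, since nothing would then link the fast oscillation of $x$ to the variation of $r$. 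Hence the feedback relation $\ep\dot x(t) = -x(t) + f(x(\sigma(t)))$ must be used through the retarded value $x(\sigma(t))$, and the role of $f\in C^0$ is precisely to make $x$ smooth and to tie $\dot x(t)$ to the solution's own past.

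First I would locate crossings of the two extremal-delay levels. On each of the three subintervals $[t_0,t_1]$, $[t_1,t_2]$, $[t_2,t_3]$ the solution passes from one side of the band $[a_0,a_1]$ to the other (up, then down, then up), so by the intermediate value theorem it attains each of $\xi_M$ and $\xi_m$ on every subinterval. This yields times $A_1<A_2<A_3$ with $x(A_i)=\xi_M$ and $B_1<B_2<B_3$ with $x(B_i)=\xi_m$, one of each per passage, all lying in $[t_0,t_3]$, so that $A_3-A_1 < t_3-t_0$ and $B_3-B_1 < t_3-t_0$. Their retarded arguments are $\sigma(A_i)=A_i-M$ and $\sigma(B_i)=B_i-m$.

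The core is to compare these retarded arguments. If $\sigma$ were monotone increasing one would get even the sharp bound $M-m < t_3-t_0$ at once: within a single passage the levels $\xi_m,\xi_M$ are crossed in a definite order, and applying $\sigma(\text{earlier})<\sigma(\text{later})$ to two consecutive extremal-delay crossings gives the estimate directly. The difficulty is that $\sigma$ need \emph{not} be monotone, since $\dot\sigma = 1 - \partial_x r\cdot\dot x$ wherever $r$ is differentiable and $\dot x$ can be large, so $\sigma$ may fold backward. The three passages are meant to replace monotonicity: the sign of $\dot x$ at each crossing is fixed by the oscillation pattern, and through the equation this fixes the sign of $f(x(\sigma))-\xi$, forcing the retarded values $x(A_i-M)$ and $x(B_i-m)$ to straddle fixed levels inside the two short windows $[A_1-M,\,A_3-M]$ and $[B_1-m,\,B_3-m]$. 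I would then argue by contradiction: if $M-m > 3(t_3-t_0)$, these windows are disjoint and separated by a time larger than $2(t_3-t_0)$, and this separation is incompatible with the solution feeding its own oscillation back through $\sigma$, since the retarded copies of the oscillation would detach from the band while $x$ remains single-valued and satisfies the feedback relation.

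The main obstacle is exactly this last step: converting ``$\sigma$ folds backward by more than $3(t_3-t_0)$'' into a genuine contradiction without assuming monotonicity of $\sigma$ or of $f$. I expect the factor $3$ to arise from accounting separately for the three passages, each contributing a time span bounded by $t_3-t_0$, precisely when the single monotone comparison is unavailable. Controlling the non-monotonicity of $\sigma$ using only the Lipschitz bound on $r$ and the $C^1$ regularity of $x$ is the delicate point, and is where I would concentrate the effort.
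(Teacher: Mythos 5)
First, a point of reference: the paper does not prove Proposition~\ref{propMPN}. It is imported as Proposition~3.4 of \cite{MPN2}, and the \emph{proof} environment that follows it in the text establishes the subsequent corollary on amplitudes, not this statement. So your attempt must stand on its own, and as written it does not: it is a plan whose decisive step is missing, as you yourself acknowledge (``the main obstacle is exactly this last step\dots is where I would concentrate the effort''). Concretely, after reducing to the configuration in which the two retarded windows $[A_1-M,\,A_3-M]$ and $[B_1-m,\,B_3-m]$ are disjoint and separated by more than $2(t_3-t_0)$, you assert that this ``is incompatible with the solution feeding its own oscillation back through $\sigma$.'' No mechanism for that incompatibility is supplied: the solution is defined on all of $\R$, nothing you have established constrains $x$ on those earlier windows, and ``the retarded copies of the oscillation would detach from the band while $x$ remains single-valued'' is a restatement of the picture, not a contradiction. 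The factor $3$ is likewise only conjectured to arise from ``three passages, each contributing $t_3-t_0$''; it is never derived.

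There is also a step that fails under the stated hypotheses. You propose to use the sign of $\dot x$ at a crossing of a level $\xi$, via $\ep\dot x=-\xi+f(x(\sigma))$, to ``force the retarded values $x(A_i-M)$ and $x(B_i-m)$ to straddle fixed levels.'' Knowing the sign of $f(x(\sigma(t)))-\xi$ locates $x(\sigma(t))$ only if $f$ can be inverted, i.e.\ is monotone; the proposition assumes only $f\in C^0(\mathbb{R},\mathbb{R})$. What you do have right, and is worth keeping, is the preliminary architecture: the observation that the statement cannot be purely kinematic, the continuity of $\sigma(t)=t-r(x(t),\ep)$ and the $C^1$ regularity of $x$, the intermediate-value selection of three crossings of each extremal-delay level with controlled derivative signs, and the remark that monotonicity of $\sigma$ would yield the sharp bound $t_3-t_0$. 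But the whole content of the proposition is the non-monotone case, and that is exactly the part left unproved; completing it would require reproducing (or replacing) the Mallet-Paret--Nussbaum argument itself.
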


\begin{proposition}
\par Suppose that the feedback $f$ is $C^0(\mathbb{R},\mathbb{R})$ 
and that the delay
function $r(\epsilon, x)$ is Lipschitz regular in $x$.
Let $r(x,\ep)=1+ \eta (\ep ) R(x)$, and suppose that
$R(x)$ is not constant on any interval (e.g. $R(x)=r_kx^k$, $k\ge 1$
and $r_k\ne 0$) and $\eta(\ep) \sim \eta(0) \neq 0$.
\par Then, there is a function $\varphi$, depending only on $\eta(0)$ and $R$, 
with $\varphi(T) \underset{T \rightarrow 0}{\longrightarrow} 0$,
such that for any $x(t)$ be a periodic solution of
Eq. (\ref{eqmain}) with period $T$, and
 $a_0=\min x(t)$ and $a_1 = \max x(t)$, we have
 $$ \vert a_1 - a_0 \vert \leq \varphi(T).  $$
\par In particular, if $f$ and $r$ satisfy additionally the hypotheses of theorem \ref{thm_hopf}, 
the periodic solutions $x_k$ that appear when $\epsilon = \epsilon_k$ (see theorem \ref{thm_hopf}) 
have periods $T_k \underset{k \to \infty}{\longrightarrow}  0$ 
and amplitudes $ \max \{ x_k(t) \} - \min \{ x_k(t) \}  \underset{k \to \infty}{\longrightarrow} 0$.
\end{proposition}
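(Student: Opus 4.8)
The plan is to derive the amplitude bound directly from Proposition~\ref{propMPN} by constructing, for a periodic solution of small period $T$, the four time points required by that proposition. First I would let $x(t)$ be a $T$-periodic solution with minimum $a_0 = \min x(t)$ and maximum $a_1 = \max x(t)$. By periodicity and the intermediate value theorem, within any interval of length $2T$ I can locate times $t_0 < t_1 < t_2 < t_3$ with $t_3 - t_0 \le 2T$ (in fact contained in roughly two periods) such that $x$ attains its minimum $a_0$ at the even-indexed times and its maximum $a_1$ at the odd-indexed times; more precisely $x(t_i) \le a_0$ for even $i$ and $x(t_i) \ge a_1$ for odd $i$ is satisfied with equality at the extrema. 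The key quantitative input is that these four alternating points can be chosen inside a window whose length is bounded by a fixed multiple of $T$, say $t_3 - t_0 \le C_0 T$ for an absolute constant $C_0$.

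Next I would feed these points into Proposition~\ref{propMPN}, which yields
\begin{equation}
\max_{[a_0,a_1]} r(\cdot,\ep) - \min_{[a_0,a_1]} r(\cdot,\ep) \le 3(t_3 - t_0) \le 3C_0 T.
\end{equation}
Since $r(x,\ep) = 1 + \eta(\ep)R(x)$, the left-hand side equals $\eta(\ep)\bigl(\max_{[a_0,a_1]} R - \min_{[a_0,a_1]} R\bigr)$, and using $\eta(\ep)\sim\eta(0)\ne 0$ this gives
\begin{equation}
\max_{[a_0,a_1]} R - \min_{[a_0,a_1]} R \le \frac{3C_0 T}{\eta(0)}(1+o(1)).
\end{equation}
Thus the oscillation of $R$ over the interval $[a_0,a_1]$ is forced to be small as $T\to 0$. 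Because $R$ is not constant on any interval, its oscillation over an interval controls the interval's length from below: I would define
\begin{equation}
\psi(\delta) = \inf\Bigl\{\, \max_{[c,c']} R - \min_{[c,c']} R \; : \; c' - c \ge \delta,\ [c,c']\subset K \,\Bigr\},
\end{equation}
which is strictly positive for $\delta>0$ on any compact range $K$ of interest, and is monotone in $\delta$. Inverting this relation produces the desired modulus $\varphi$ with $\varphi(T)\to 0$ as $T\to 0$, giving $|a_1 - a_0| \le \varphi(T)$.

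For the final ``in particular'' assertion, I would invoke Theorem~\ref{thm_hopf} together with the frequency analysis following it: the angular frequencies $\beta_k\to\infty$, so the periods $T_k = 2\pi/(k\text{-th mode frequency})$ of the bifurcating solutions tend to $0$ as $k\to\infty$. Applying the amplitude bound just established then forces $\max\{x_k(t)\} - \min\{x_k(t)\} = |a_1 - a_0| \le \varphi(T_k) \to 0$.

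The main obstacle I anticipate is the construction of the function $\varphi$ and the justification that it is well defined and tends to $0$ — this requires care because a priori the extrema $a_0, a_1$ could wander, so the interval $[a_0,a_1]$ must be confined to a fixed compact set (which should follow from an a priori bound on periodic solutions, e.g. that solutions stay in a bounded absorbing region determined by $f$). The non-flatness hypothesis on $R$ is exactly what prevents $R$ from having small oscillation over long intervals, but turning ``$R$ is nowhere locally constant'' into a uniform modulus $\psi$ requires a compactness argument, and one must verify that $\psi(\delta)>0$ for each $\delta>0$ rather than merely for fixed intervals. The other delicate point is pinning down the constant $C_0$ in $t_3 - t_0 \le C_0 T$: I would need to confirm that a periodic solution genuinely attains both its max and its min, with the correct alternation pattern, within a controlled number of periods, which should hold since a nonconstant periodic solution must oscillate between $a_0$ and $a_1$ at least once per period.
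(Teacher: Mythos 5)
Your proposal is correct and follows essentially the same route as the paper: choosing the four alternating times $t_0<t_1<t_2=t_0+T<t_3=t_1+T$ within two periods, feeding them into Proposition~\ref{propMPN} to bound $\eta(\ep)\bigl(\max_{[a_0,a_1]}R-\min_{[a_0,a_1]}R\bigr)$ by $6T$, and using the non-constancy of $R$ on intervals together with $T_k\to 0$ from the Hopf frequencies. In fact you are somewhat more careful than the paper, which leaves the construction of $\varphi$ (your modulus $\psi$ and the compactness/a priori bound issue) implicit.
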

\begin{proof}
To see why proposition \ref{propMPN} precludes the existence of large-amplitude rapidly
oscillating periodic solutions, let $x(t)$ be a periodic solution of
Eq. (\ref{eqmain}) with period $T$, and
let $a_0=\min x(t)$ and $a_1 = \max x(t)$. Choose $t_0$ such that
$x(t_0)=a_0$, choose $t_1$ such that $t_0<t_1 < t_0 +T$ and
$x(t_1)=a_1$, and $t_2=t_0 +T$ and $t_3=t_1+T$.  As in section
\ref{intro}, let $r(x,\ep)=1+ \eta (\ep ) R(x)$, and suppose that
$R(x)$ is not constant on any interval, e.g. $R(x)=r_kx^k$, $k\ge 1$,
$r_k\ne 0$, and $\eta(\ep) \sim \eta(0) \neq 0$.  Then $6T \geq
3(t_3-t_0)\ge \eta(0)|r_k| \Dt_R/3$, where
$\Dt_R=\max_{x\in[a_0,a_1]}R(x)-\min_{x\in[a_0,a_1]}R(x)>0$.  Given
that $\eta(0) \neq 0$, we have $|r_k| \Dt_R/3 \leq
\frac{6T}{\eta(0)}$. 

This estimate relates the period of oscillations indirectly to their
amplitude (through $\Dt_R$) :
the faster the periodic oscillations are, i.e. the smaller $T$ is, the
smaller their amplitude is.
The Hopf bifurcation at $\ep = \epsilon_k$ 
gives rise to a periodic solution with period $T_k = \frac 1{\beta_k} \underset{k \to \infty}{\longrightarrow} 0$ 
(see theorem \ref{thm_hopf} and  appendix \ref{subsec_hopf}), 
and the estimate above implies that so does their amplitudes $a_1 - a_0 \underset{k \to \infty}{\longrightarrow} 0$. 
\end{proof}

We argue that this result indicates that DDE (\ref{eqmain}) with state
dependent delay, and $\eta(0)\neq 0$, cannot support metastable
transient oscillations that resemble those of DDEs with constant delays. 
This is further supported by the fact that the
profiles of oscillations in this case are not solutions to usual transition
layer equations (see discussion in next section \ref{sec_tle}) and
confirmed through extensive numerical investigations (section
\ref{section_num}).

\section{Transition layer and metastability}
\label{sec_tle}

In this section, we refine our previous analysis of the conditions
under which DDE (\ref{eqmain}), with state dependent delay, can support
metastable oscillations through the introduction of transition layer
equations. Such equations have been used previously to determine the
shape of slowly oscillating periodic solutions for scalar DDEs with
constant delay and negative feedback in the singular limit $\ep \to
0$ \cite{m-pn}.  They have also been instrumental for the analysis of
metastable solutions in scalar DDEs with constant delays and monotone
feedback in the same singular limit \cite{jdde}. 

For negative feedbacks, the singular limit as $\ep\to 0$ of Eq.
(\ref{eqmain}) in the case $ \eta(0) \neq 0$ can be analyzed through
the theory developed in \cite{MPN1,MPN2,MPN3} that replaces transition
layer equations with the so-called ``Max-Plus'' equations. However,
given that, as argued in section \ref{subsub_b} 
and numerically shown in section \ref{section_num}, such systems do not
support metastability, we will not dwell any further in this case.
Throughout the remainder of this section, our focus is on the case
$\eta(0)=0$, which we henceforth assume to hold.

For some feedback functions $f$ and state dependent delay function
$r$, solutions of Eq. (\ref{eqmain}) have an approximately periodic
square-wave shape when $\epsilon \rightarrow 0$, as in the constant
delay case $r(x,\epsilon) = 1$, and the corresponding ``jumps'' can be
analyzed with the help of transition layer equations.
The approximate period of metastable oscillations (depending on
$\epsilon$) is an essential point in finding transition layer
equations.  For constant delay, at first order, this period is
$2+\rho\ep$ for negative feedback (\cite{m-pn} theorem 3.2), and
$1+\epsilon \rho$ for positive feedback \cite{jdde}.  In section
\ref{section_num} similar asymptotics are shown to hold in the
state-dependent delay case as well when $\eta^\prime(0)= 0 $ and $ 0
\leq \eta^\prime(0) < + \infty$ .  In this section we write
appropriate transition layer equations under various hypotheses on
$\eta$, show that their solutions exist, and that these equations can
be used to characterize metastability, as confirmed by the numerical
investigation presented in section~\ref{section_num}.

The case $\eta(0)=0$ and $\eta^\prime(0) = + \infty$ has also been
investigated, using $r(x, \epsilon) = 1 + \epsilon^\alpha R(x)$ with
$\alpha=\frac{1}{2}$. We found numerically that oscillations are
square-wave-like when $\epsilon$ is small, their period is either $2+
\epsilon^{\alpha} \rho$ (negative feedback) or $1+ \epsilon^{\alpha}
\rho$ (positive feedback), and rescaling time as $\phi(t) =
x(\frac{t}{\epsilon^{\alpha}})$, one observes convergence to a
transition layer profile (see figure~\ref{Fig_front_a05} in
section~\ref{section_num}).  However the scaling argument used in the
case $\eta^\prime(0) < + \infty$ does not apply here, and these
transition layer profiles are not analyzed in this section.

The remainder of this section is organized as follows.  First, in
section \ref{subsection_tle_alpha=1}, we show that when $ 0 <
\eta^\prime(0) < + \infty$, appropriately defined transition layer
equations can be used to find a symmetry condition that characterizes
precisely the cases of metastable oscillatory transients.  Details
about the existence of transition layer solutions, their numerical
construction, and illustrating figures can be found in the appendix
section \ref{app_tlo}.  The case of $\eta^\prime(0)=0$ is examined in
section \ref{subsection_tle_alpha>1}. One can still write a transition
layer problem, but it does not depend on the delay function $R$
anymore, leading to incorrect results. To overcome this, we have
introduced a one-parameter family of auxiliary transition layer
problems, and thanks to the analysis of the corresponding
one-parameter family of transition layer solutions we are able to
characterize the cases where metastability can occur. Finally, in
section \ref{subsec_meta}, we discuss a new phenomenon: the
possibility of a state dependent delay giving rise to metastability in
equations that do not exhibit such transients when the delay is constant.

\subsection{Case $\eta(0)=0$ and $0< \eta^\prime(0) < + \infty$}
\label{subsection_tle_alpha=1}

\subsubsection{Positive feedback}
\label{subsub_pf}

Let $f$ be of positive feedback type, and consider Eq.
(\ref{eqmain}) where $r(x,\epsilon) = 1 + \eta(\epsilon)R(x)$ with
$\eta(0)=0$ and $0< \eta^\prime(0) < + \infty$ (without loss of
generality we assume that $\eta^\prime(0)=1$ in the remainder of this
section). For such delays, our numerical investigations show that
metastable oscillations are approximately $1+ \epsilon \rho$ periodic.

As for DDEs with constant delays (see section \ref{metastability}),
the jumps of these square-wave-like metastable solutions
connecting respectively $b$ to $-a$ and vice versa
are described by:
\begin{equation}\label{eqTLE}
  \dot  \phi^\pm(t)  = - \phi^\pm(t) + f( \phi^\pm( t - R(\phi^\pm(t))
  + \rho^\pm  ) )
  \; \; \; \; 
\end{equation}
where $\rho^->0$ and $\rho^+>0$ are unknown constants (drift speeds) and the
functions $\phi^\pm$ satisfy the boundary conditions $\lim_{t\to
  -\infty}\phi^-(t)=b$, $\lim_{t\to \infty}\phi^-(t)=-a$ , $\lim_{t\to
  -\infty}\phi^+(t)=-a$, and $\lim_{t\to \infty}\phi^+(t)=b$. 

If a solution to equation (\ref{eqTLE}) exists, it is called a
transition layer solution.  In contrast to the constant delay-case,
this transition layer equation (\ref{eqTLE}) is a state-dependent
equation, and from a theoretical point of view, depending on the
values of $\rho$ and $R(x)$, it may be both advanced and delayed. The
numerical method used for solving Eq. (\ref{eqTLE}) is presented in
the Appendix \ref{sub_nummeth}. We have defined an operator $\mathcal
T$ whose (stable locally attractive) fixed points are
solutions of (\ref{eqTLE}).

In general, the constants $\rho^-$ and $\rho^+$ associated to the
decreasing and increasing transition layer solutions are
different. In the Table \ref{Fig_tableau_rho} we display 
$\rho^-$ and $\rho^+$ solutions of equation (\ref{eqTLE}) for
$\eta(\epsilon)=\epsilon$ and various choices of $R(x)$, for both
symmetric and non-symmetric positive feedback $f$ (numerical method
details given in the Appendix \ref{sub_nummeth}). 
We found that in the state dependent case of equation (\ref{eqTLE}), 
as in the constant delay case, oscillatory transients are metastable only when
$\rho^+ = \rho^-$. 
Table \ref{Fig_tableau_rho} also shows that
 $\rho^+ =\rho^-$ is obtained only when the positive feedback function is
symmetric and the delay $R(x)$ is even. 

\begin{table}[th!]
\begin{center}
\begin{tabular}{|c|c|c|c|c|}
  \hline
  (a)      &  $R(x)=0$   & $R(x)=x$       & $R(x)=\cos(x)$ & $R(x)=\frac12 x(1+x)$  \\ \hline
  $\rho^+$     &   0.824     &     0.554      &       1.752    &          0.732         \\ \hline
  $\rho^-$     &   0.824     &     1.158      &      1.752      &         1.158         \\ \hline
\end{tabular}
\end{center}
\vspace{-0.6cm}
\begin{center}
\begin{tabular}{|c|c|c|c|c|}
\hline
    (b)      &  $R(x)=0$   &    $R(x)=x$    & $R(x)=\cos(x)$ & $R(x)=\frac12 x(1+x)$  \\ \hline
$\rho^+$     &      1.024  &      0.690     &       1.916    &        0.932           \\ \hline
$\rho^-$     &     0.664   &      0.994     &       1.612    &        0.884           \\ \hline
\end{tabular}
\end{center}
\vspace{-0.3cm}
\caption{\small{Drift speeds $\rho^+, \rho^-$,
    solutions of equation (\ref{eqTLE}) for positive feedbacks $f$, $\eta(\epsilon)=\epsilon$ and
    various choices of delay $R(x)$. Metastability occurs only when 
    $\rho^+ = \rho^-$.
    { Table (a) Symmetric Positive Feedback 
      $f(x) = \frac12 \arctan (5x)$. Table (b) Non-symmetric Positive Feedback 
      $f(x) = \frac12 \arctan (5(x-0.05)) + \frac12 \arctan(0.25)$.
      (see appendix \ref{sub_nummeth} for details on the numerical
      method and parameters value used.) 
    }}}
\label{Fig_tableau_rho}
\end{table}

\subsubsection{Negative feedback}
\label{subsub_nf}

We now discuss the transition layer equation for state dependent
delayed negative feedback.  Numerically, transient oscillations are
square-wave-like and they have an approximate period $T \approx 2 +
\epsilon C$ (see section \ref{section_num}). Likewise the case of DDEs
with constant delay, the transition layer equations for the increasing
and decreasing transition layer solutions of DDEs with state dependent
delays are coupled :
\begin{equation}
\left \{
\begin{array}{ccc}
\dot \phi^+(t) &=& - \phi^+(t) + f( \phi^-[ t - R(\phi^-(t)) + \rho^-
] ) \; , \\
\dot \phi^-(t) &=& - \phi^-(t) + f( \phi^+[ t - R(\phi^+(t)) + \rho^+
] ) \; , \\
\end{array}
\right .
\label{eqTLEneg}
\end{equation}
where $\phi^+$ is increasing and $\phi^-$ is decreasing on
$\mathbb{R}$, with $\lim_{t\to -\infty}\phi^+(t)=-a$, $\lim_{t\to
  \infty}\phi^+(t)=b$, $\lim_{t\to -\infty}\phi^-(t)=b$ and
$\lim_{t\to \infty}\phi^-(t)=-a$, with $\phi^+(0)=\phi^-(0)=0$ and
$\rho^\pm$ are unknown real constants (drift speeds).
See appendix section \ref{app_tlo} for numerical
solutions of Eq. (\ref{eqTLEneg}).

For negative feedbacks, the symmetry condition supporting
metastability is always satisfied when the delay is constant
\cite{jdde}, and we show that the same holds when the delay is state dependent.
In Table \ref{Fig_tableau_rho_NF} we display the drift speeds $\rho^+$,
$\rho^-$ that are solutions of Eq. (\ref{eqTLEneg}) for
$\eta(\epsilon)=\epsilon$ and various choices of $R(x)$. Comparison of
Tables~\ref{Fig_tableau_rho} and \ref{Fig_tableau_rho_NF} shows that
when the feedback function $f$ is symmetric, and $R(x)$ is even,
$\rho^+=\rho^-$ for both positive and negative feedbacks.  This
happens because when $f$ is symmetric, and $R(x)$ is even, the
increasing solutions of the transition layer equations for both
positive and negative feedback coincide (the same happens for the
decreasing solutions).
 
\begin{table}[th!]
\begin{center}
\begin{tabular}{|c|c|c|c|c|}
\hline
    (a)      &  $R(x)=0$   & $R(x)=x$       & $R(x)=\cos(x)$ & $R(x)=\frac12 x(1+x)$  \\ \hline
$\rho^+$     &    0.824    &    1.172       &    1.744       &     1.122             \\ \hline
$\rho^-$     &    0.824    &    0.434       &    1.744       &     0.714             \\ \hline
\end{tabular}
\end{center}
\vspace{-0.6cm}
\begin{center}
\begin{tabular}{|c|c|c|c|c|}
\hline
       (b)      &  $R(x)=0$   & $R(x)=x$       & $R(x)=\cos(x)$ & $R(x)=\frac12 x(1+x)$  \\ \hline
$\rho^+$     &   -0.702     &    -0.616      &    0.280        &      -0.646            \\ \hline
$\rho^-$     &    2.574     &    2.362       &     3.570      &      2.476           \\ \hline
\end{tabular}
\end{center}
\vspace{-0.3cm}
\caption{\small{Drift speeds $\rho^+, \rho^-$,
    solutions of equation (\ref{eqTLEneg}) for negative feedbacks $f$, $\eta(\epsilon)=\epsilon$ and
    various choices of delay $R(x)$. Metastability occurs regardless of the equality  $\rho^+
    =\rho^-$.  {(a) Symmetric
      negative feedback $f(x)= - \frac12 \arctan(5x)$, (b)
      Non-symmetric negative feedback $f(x) =
      - \frac12 \arctan(5(x+0.05)) + \frac12 \arctan(0.25)$.
      (see Appendix \ref{sub_nummeth_neg} for details on the
      numerical method and parameters value used.)   }}}
\label{Fig_tableau_rho_NF}
\end{table}

\subsection{Case $\eta(0)=0$ and $\eta^\prime(0)=0$}
\label{subsection_tle_alpha>1}
\subsubsection{Positive Feedback}
\label{subsubsection_tle_alpha>1-pos}

If $\eta(0)=0$ and $\eta^\prime(0)=0$, solutions are approximately
$1+\epsilon \rho$ periodic, 
and the same time rescaling $\phi(t) = x(\epsilon t)$
implies
$ \dot  \phi(t)  = - \phi(t) + f( \phi(t -
\frac{\eta(\epsilon)}{\epsilon}  R(\phi(t)) + \rho  ) ) \; , $
and as $\epsilon \rightarrow 0$
\begin{equation}\label{eqTLEcst}
\dot  \phi(t)  = - \phi(t) + f( \phi( t + \rho ) ) \; ,
\end{equation}
where the decreasing and increasing solutions $\phi^\mp$ must satisfy
the boundary conditions $\underset{t \rightarrow - \infty}{\lim}
\phi^+(t) = -a$, $\underset{t \rightarrow + \infty}{\lim} \phi^+(t) =
b$, $\underset{t \rightarrow - \infty}{\lim} \phi^-(t) = b$ and
$\underset{t \rightarrow + \infty}{\lim} \phi^-(t) = -a$, with
$\phi^\pm(0)=0$ and $\rho = \rho^\mp>0$ is an unknown constant.
This equation (\ref{eqTLEcst}) is the same one found in the 
constant delay case, for which the existence of decreasing and
increasing transition layer solutions has been proven in \cite{jdde}.
In particular, when $\eta(0)=0$ and $\eta^\prime(0)=0$, the
drift speeds $\rho^\mp$ are equal to those of the corresponding
constant delay case ($R(x)=0$). As a consequence, we obtain that for
positive feedback $f$, if $\eta(0)=0$ and $\eta^\prime(0)=0$,
metastability cannot occur if $\rho^+ \neq \rho^-$. However, when
$\rho^+=\rho^-$, metastability may or may not occur. 

To obtain the symmetry requirement for metastability in this case
$\eta(0)=0$ and $\eta^\prime(0)=0$, we introduce the following
1-parameter family of transition layer equations
\begin{equation}\label{eqTLElambda}
\dot  \phi^\pm(t)  = - \phi^\pm(t) + f( \phi^\pm[ t + \rho_\lambda^\pm
- \lambda R(\phi^\pm(t)) ] ), 
\end{equation}
where, $\lambda \in \mathbb{R}$ is a real parameter. As previously,
$\rho_\lambda^\pm$ are some unknown real constants, and $\phi^\pm$ are
the transition layer solutions, that are expected to depend on
$\lambda$.  The case $\lambda=0$ reproduces the drift speeds
$\rho^\pm$ of the transition layer equation (\ref{eqTLEcst}).  In
figure \ref{Fig_rho_pm_alpha>1} we display the constants
$\rho_\lambda^\pm$ as function of $\lambda$ for the symmetric positive
feedback function $f(x)= \frac12 \arctan(5x)$ and various choices of
$R(x)$. We found that if $\rho_\lambda^+ = \rho_\lambda^-$ holds only
for $\lambda = 0$ (panels (b) and (d) in figure
\ref{Fig_rho_pm_alpha>1} ), metastable DITOs are not observed for
positive values of $\epsilon$. Oscillatory transients are metastable
only when $\rho^+_\lambda = \rho^-_\lambda$ on some non-trivial
interval $\lambda \in [0, \delta]$ with $\delta >0$ (panels (a) and
(c) in figure \ref{Fig_rho_pm_alpha>1} ). So this is the new
sufficient condition for the existence of metastable oscillatory
transients when the positive feedback $f$ is symmetric and
$\eta^\prime(0)=\eta(0)=0$.

\begin{figure}[th!]
\begin{center}
\includegraphics[width=7.5cm]{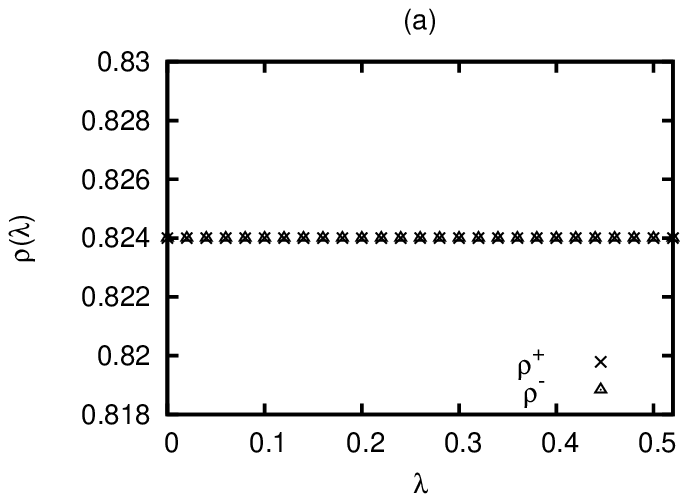}
\includegraphics[width=7.5cm]{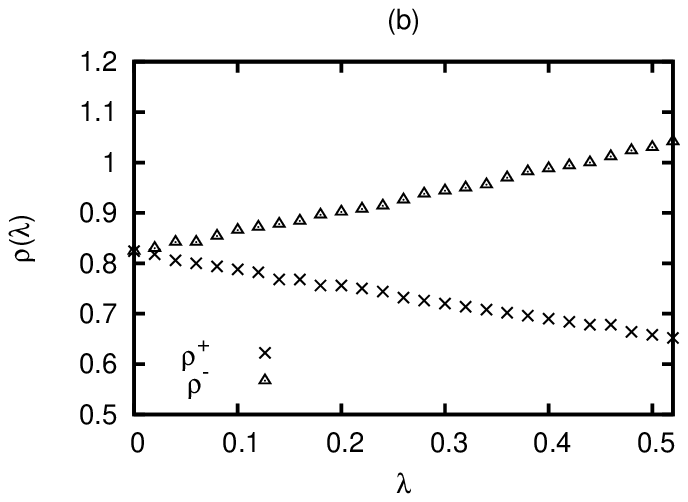}  
\end{center}
\vspace{-1.0cm}
\begin{center}
\includegraphics[width=7.5cm]{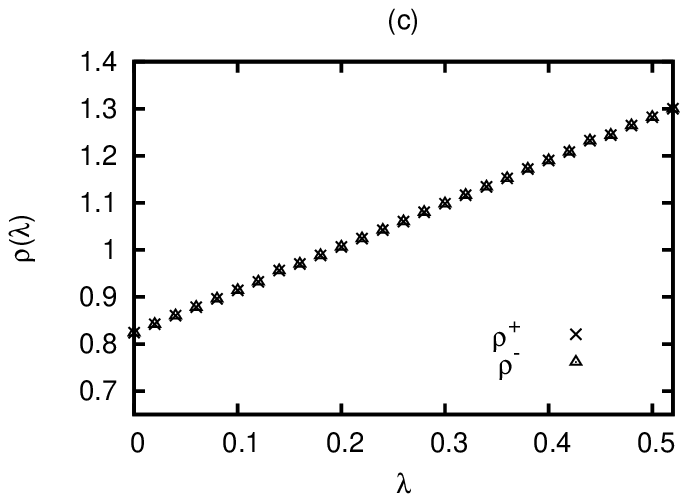}
\includegraphics[width=7.5cm]{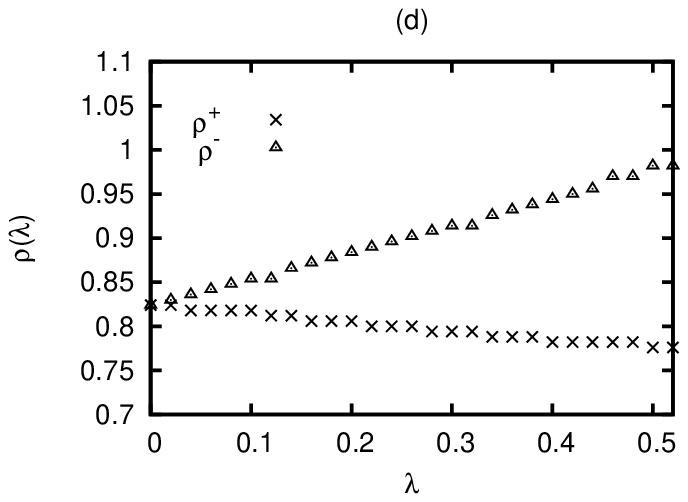}  
\end{center}
\vspace{-1.0cm}
\caption{\small{Drift speeds $\rho_\lambda^\pm$, solutions of equation
  (\ref{eqTLElambda}), for $\lambda \in [0,0.5]$, using the symmetric
  positive feedback function $f(x)= \frac12 \arctan(5x)$ and various
  choices of $R(x)$: (a) $R(x)=0$ (constant delay); (b) $R(x)=x$; (c)
  $R(x) = \cos(x)$; (d) $R(x) = \frac12 x(1+x)$.  
See appendix \ref{sub_nummeth} for details on numerical methods and
  parameters used.
}}
\label{Fig_rho_pm_alpha>1}
\end{figure}

\subsubsection{Negative Feedback}
\label{subsubsection_tle_alpha>1-neg}
For $\eta(0)=0$ and $\eta^\prime(0) \in ]0;+ \infty[$, we have seen
that the symmetry condition supporting metastability always holds when
the feedback is negative.  The same is true if
$\eta^\prime(0)=\eta(0)=0$: regardless of the symmetry of the feedback
$f$ and of the function $R(x)$ in $r(x,\epsilon) = 1 + \eta(\epsilon)
R(x)$, rapidly oscillating transients are metastable (see section
\ref{section_num}).

\subsection{Metastability induced by state dependent delay}
\label{subsec_meta}

\par In this section we present a new phenomenon:  given a constant DDE
that does not display metastability, it is possible to add a state
dependence to the delay so that the resulting state dependent DDE will
exhibit metastability.
To this end, we consider equation (\ref{eqmain}) with a non-symmetric
positive feedback function $f$, and delay function $r_\lambda(x,
\epsilon) = 1 + \epsilon \lambda R(x)$, so that  $\lambda=0$ 
corresponds to constant delay DDE which does not exhibit metastability.

We did a numerical investigation using the non-symmetric positive
feedback function $f(x) = \frac12 \arctan(5(x+0.05)) - \frac12
\arctan(0.25)$.  We have used the following functions $R(x)$: $R(x) =
x$, $R(x) = \cos(x)$, $R(x) = \frac12 x(1+x)$. Solving the transition
layer equation (\ref{eqTLElambda}) for each function $R$ we have
numerically computed the $\lambda$~-families of constants
$\rho^+_\lambda$ and $\rho^-_\lambda$, the parameter $\lambda$ being
varied within the interval $[- 1.0, 1.0]$ (it should be remarked that
for large $\lambda$ values the numerical solution of the transition
layer equation (\ref{eqTLElambda}) is problematic). Results are
displayed in the figure~\ref{Fig_rho_3}. Metastability will occur for
those values of $\lambda$ such that
$\rho^+_\lambda=\rho^-_\lambda$. Figure~\ref{Fig_rho_3}(c)
($R(x)=\cos (x)$) shows that no solution was found such that
$\rho^+_\lambda=\rho^-_\lambda$, likewise the constant delay case
(figure~\ref{Fig_rho_3}(a)), indicating that introducing a state
dependent delay may not make up for the lack of symmetry of the
feedback function $f$. Nevertheless, figure~\ref{Fig_rho_3}(b)
($R(x)=x$) and figure~\ref{Fig_rho_3}(d) ($R(x) = \frac12 x(1+x)$)
show that, in these cases, adding state dependence to the delay has
resulted in metastability. The solution such that
$\rho^+_{\lambda_c}=\rho^-_{\lambda_c}$ is $\lambda_c \approx 0.5$ in
the case $R(x) = x$ (see figure~\ref{Fig_rho_3}(b)), and $\lambda_c
\approx 1.1$ in the case $R(x) = \frac12 x (1+x)$ (see
figure~\ref{Fig_rho_3}(d)).

\begin{figure}[th!]
\begin{center}
\includegraphics[width=7.5cm]{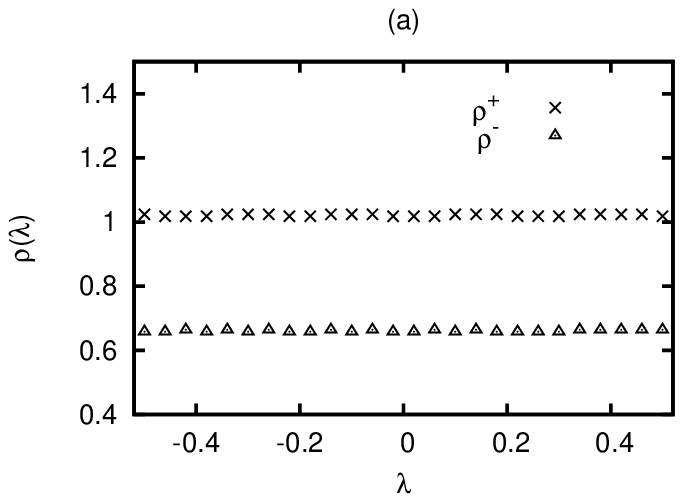}
\includegraphics[width=7.5cm]{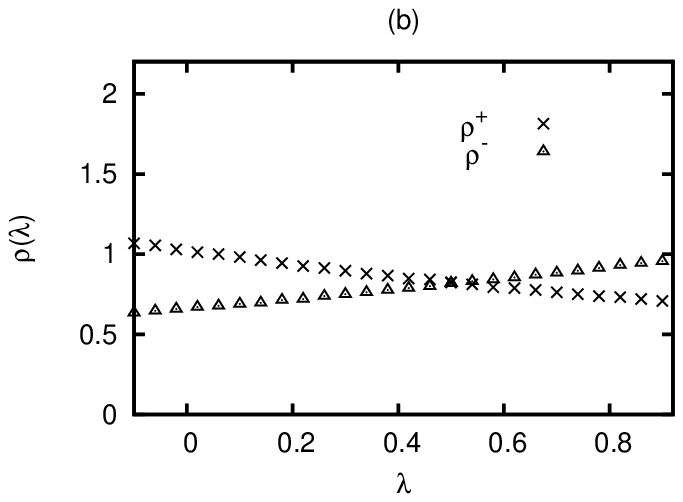}  
\end{center}
\vspace{-1.0cm}
\begin{center}
\includegraphics[width=7.5cm]{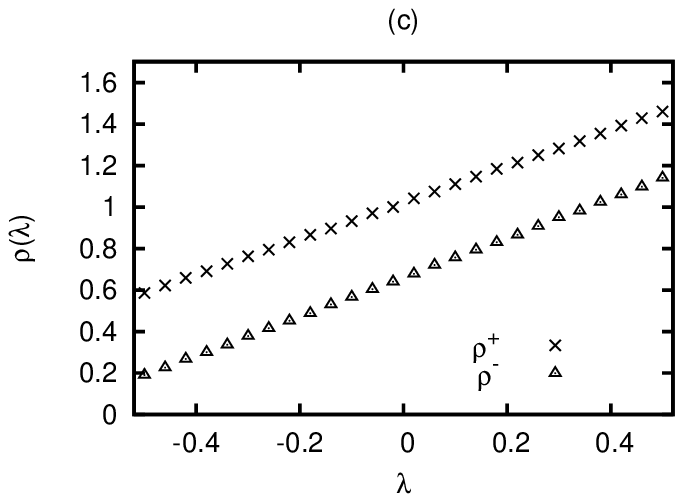}
\includegraphics[width=7.5cm]{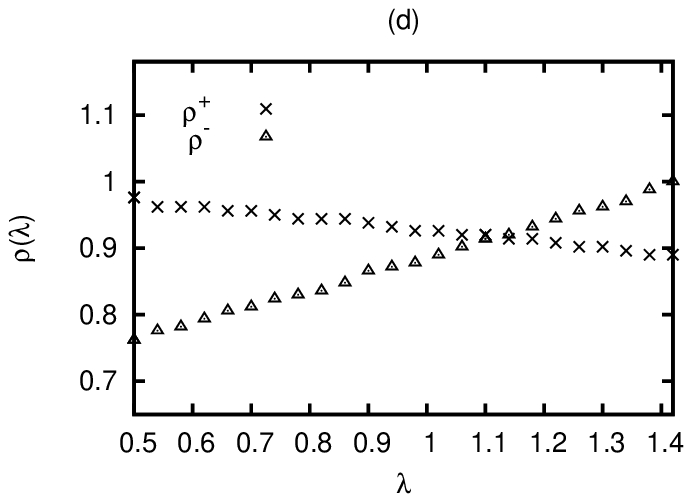}  
\end{center}
\vspace{-1.0cm}
\caption{\small{Drift speeds $\rho^\pm$ as function of $\lambda$ for
  non-symmetric positive feedback function $f(x)= \frac12
  \arctan(5(x+0.05)) - \frac12 \arctan(0.25)$, and  delay
  function $R(x)$: (a) constant delay case $R(x)=0$; (b)
  $R(x) = x$; (c) $R(x)= \cos(x)$; and  (d) $R(x) = \frac12 x
  (1+x)$.  (See appendix section \ref{sub_nummeth} for details on
  numerical methods and parameters value used.)
  }}
\label{Fig_rho_3}
\end{figure}

\section{Numerical simulations of equation (\ref{eqmain})}
\label{section_num}

To corroborate the characterization of metastable state-dependent
DITOs obtained in sections \ref{sec:periodic_sol} and \ref{sec_tle},
we have carried a numerical investigation of equation (\ref{eqmain}),
thus completing the analysis of the transient dynamics.

We shall present the results of numerical solutions of Eq.
(\ref{eqmain}), for negative and positive feedback functions $f$, and
delay functions of the form $r(x,\epsilon)= 1+ \eta(\epsilon) R(x)$.
We have used $\eta(\epsilon)=\epsilon^\alpha$ with $\alpha= 0$
($\eta(0) \neq 0$), $\alpha= 1$ ($\eta(0)= 0,\eta^\prime(0)>0 $) ,
$\alpha = \frac12$ ($\eta(0)= 0,\eta^\prime(0)=\infty $) and $\alpha
>1 $ ($\eta(0)= 0,\eta^\prime(0)=0 $). As for $R(x)$, we have
considered the following cases: $R(x) = x$, $R(x) = x^2$, $R(x)
=\frac{1}{2} x(1+x)$, $R(x) = \sin(x)$, and $R(x) = \cos(x)$.

For positive feedback, with constant delay, metastability requires
that the feedback function is an odd function of $x$, so in this case
we have used $f(x) = \frac{1}{2} \arctan(5x)$.  For negative feedback
case we have used both symmetric $f(x) = - \frac{1}{2} \arctan(5x)$
and non-symmetric $f(x) = - \frac{1}{2} \arctan(5(x+0.05)) + \frac12
\arctan(0.25)$.  The results are qualitatively the same for these two
functions, so we shall only show the results for the symmetric
negative feedback function.

In the case $\eta(0)= 0$ and $\eta^\prime(0)=0$ ($\alpha > 1$), the
observations are qualitatively the same. As $\alpha$ increases, the
results are closer and closer to those observed in the constant delay
case.

The numerical results were checked using first and second order
numerical schemes, using time steps $dt=5.10^{-5} $ and $dt=2.
10^{-6}$, and with linear interpolation for the state dependent delay
function. Simulations of solutions of equation (\ref{eqmain}) have
also been checked using the RADAR-V package in Fortran. The range of
$\epsilon$ values we investigated is $\epsilon \in [0.01,0.1]$.  We
have used the same initial condition on $t \in [-2,0]$ for all
simulations.  Metastability is checked by tracking the zeroes of the
solutions. In the positive feedback case, we say that transient
oscillations end when the last pair of zeros of the solution
disappears.  In the negative feedback case, we say that transient
oscillations end when the solution has at most one pair of zeroes in
any interval of length two (called ``slow oscillations'').

In the following, in order to show whether the transient oscillations
time $T_\epsilon$ is of order $\exp(\frac{c}{\epsilon})$, we plot
$\epsilon$ Vs $\epsilon \log(T_\epsilon)$.  If $y(\epsilon) = \epsilon
\log(T_\epsilon)$ satisfies $y(0) > 0$, then  $T_\epsilon =
e^{\frac{c}{\epsilon}(1 + o(1))}$, meaning that oscillatory
transients are metastable.  On the other hand, if  $y(\epsilon) = \epsilon
\log(T_\epsilon)$ satisfies $y(0) = 0$, then  $T_\epsilon =
e^{o \left (\frac{1}{\epsilon} \right )}$, meaning that
oscillatory transients are not metastable.

\subsection{Case $\eta(0) \neq 0$} \label{etaneq0}

\par In this case, even when $\epsilon$ is very small, the
oscillations do not last for an exponentially long time. As discussed
in section~\ref{subsub_b}, the lack of metastability in our numerical
simulations follows from some of the results on a particular class of
state-dependent DDEs due to Mallet-Paret and Nussbaum \cite{MPN1,
  MPN2, MPN3}.  As $\epsilon$ tends to zero, numerical convergence to
the limit profile shape described by Mallet-Paret and Nussbaum is
very clear.

\begin{figure}[th!]
\begin{center}
\includegraphics[width=7.5cm]{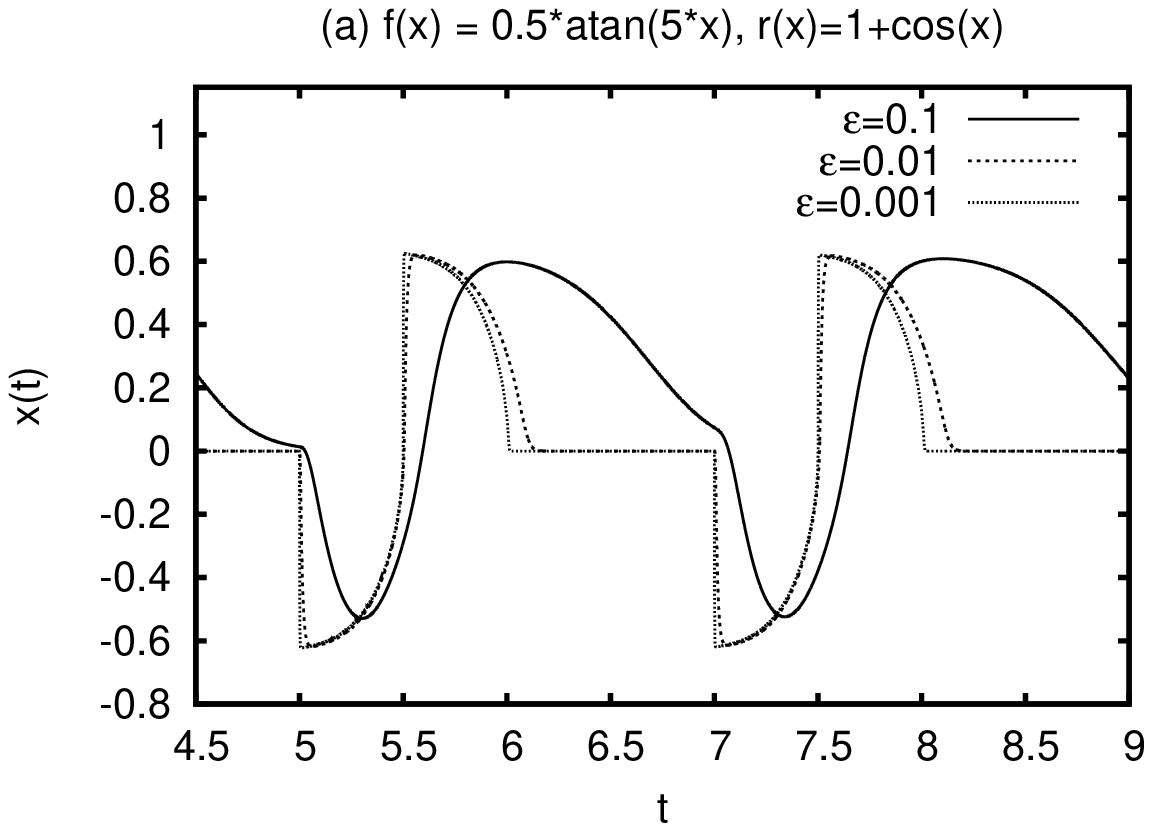} 
\includegraphics[width=7.5cm]{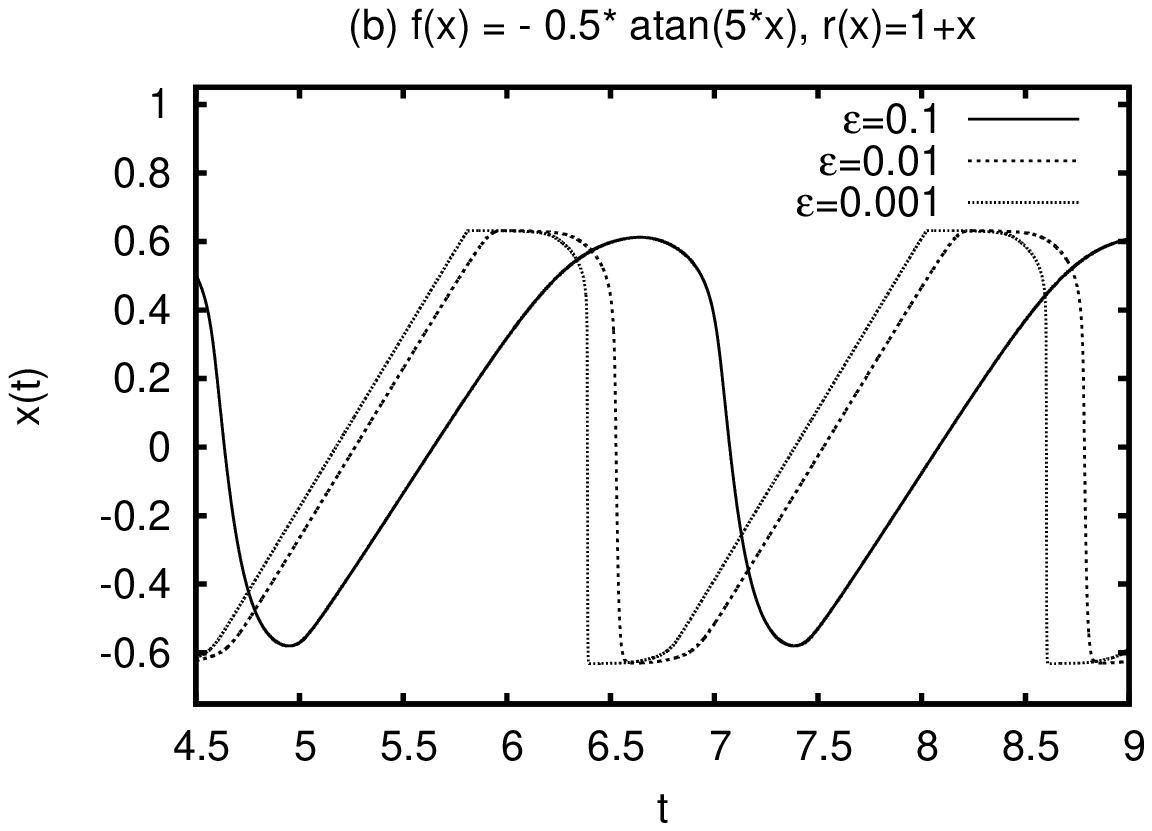}
\end{center}
\vspace{-1.0cm}
\begin{center}
\includegraphics[width=7.5cm]{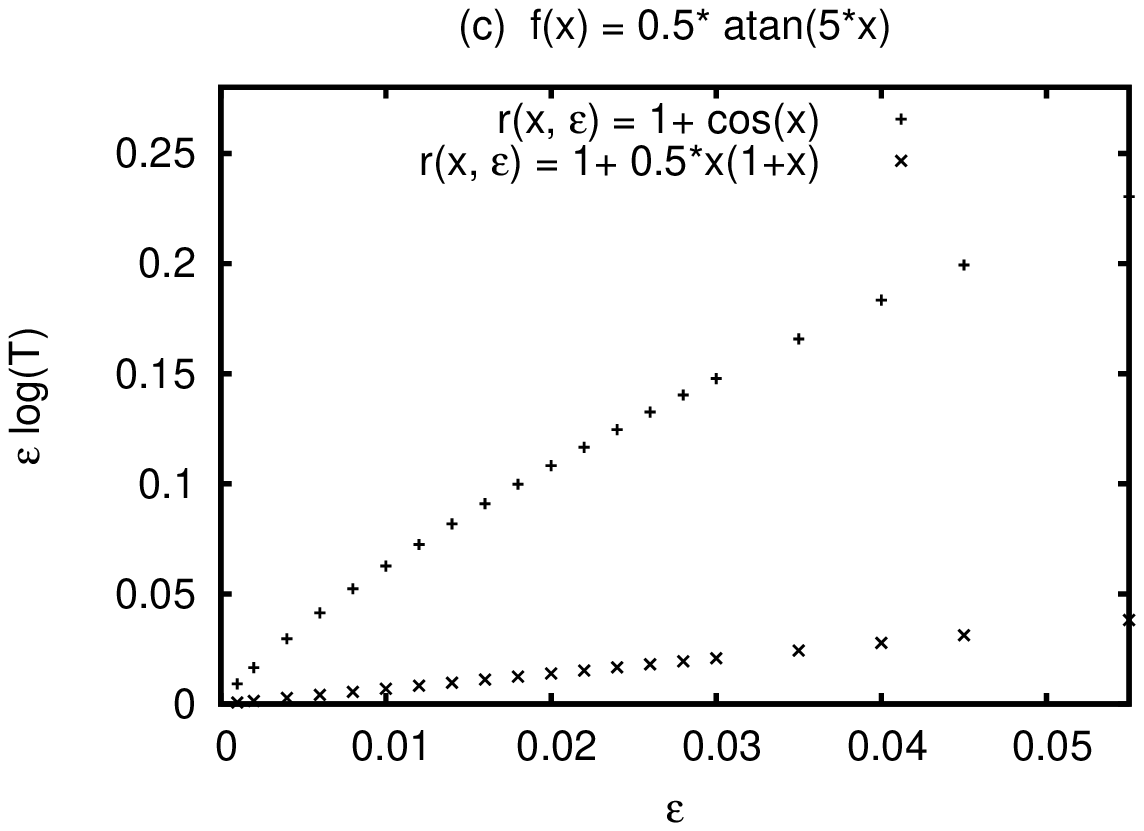} 
\includegraphics[width=7.5cm]{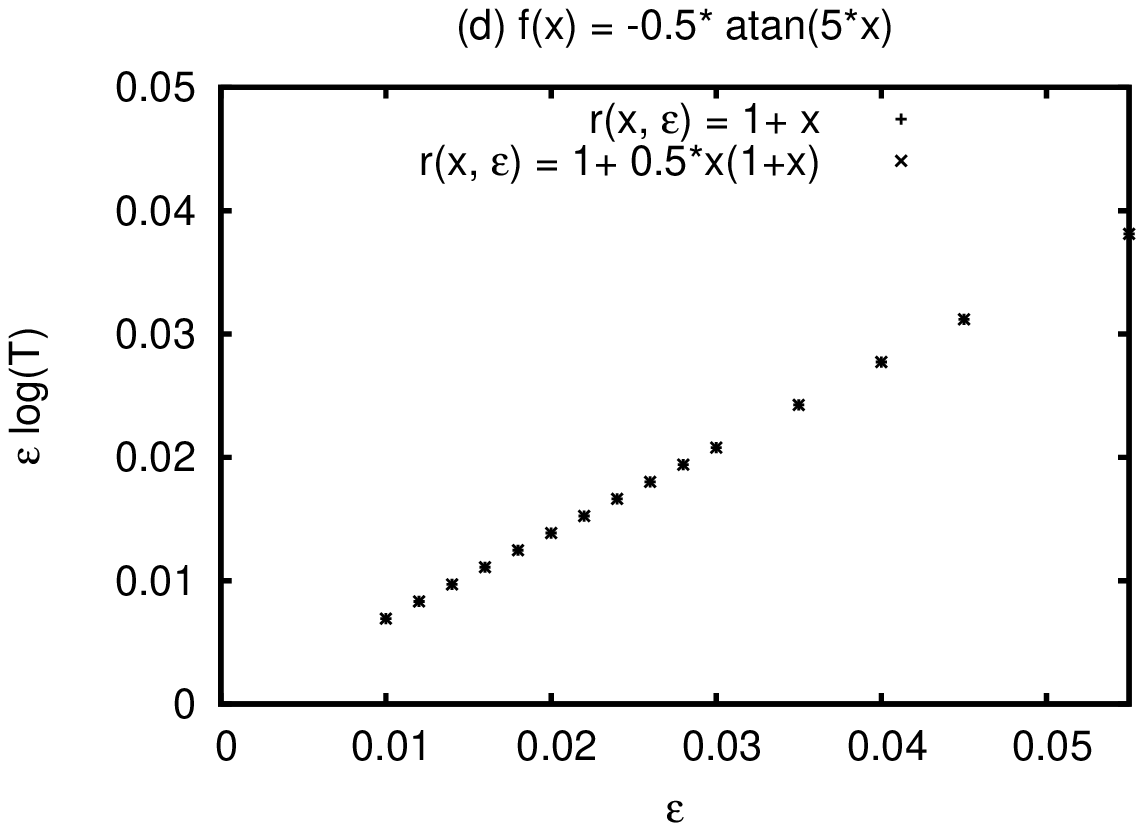} 
\end{center}
\vspace{-0.7cm}
\caption{\small{ Profile of solutions of equation (\ref{eqmain}) (top
    panels), and oscillatory transient duration (bottom panels), when
    $\eta(0) \neq 0$, with delay $r(x,\epsilon)= 1+R(x)$, for positive
    feedback $f(x) = \frac{1}{2} \arctan (5x)$ (panels on the left)
    and negative feedback $f(x) = - \frac{1}{2} \arctan (5x)$ (panels
    on the right). Top panels: solution profile for $\epsilon= 0.1,
    0.01, 0.001$, (a) positive feedback with $R(x)=\cos(x)$, (b)
    negative feedback with $R(x)=x$. (c) $\epsilon$
    Vs $\epsilon \log(T_\epsilon)$ for positive feedback, and delays
    $R(x)=\cos (x)$ (curve +), $R(x)=\frac{1}{2} x(1+x)$ (curve
    $\times$); (d) $\epsilon$ Vs $\epsilon \log(T_\epsilon)$ for
    negative feedback, and delays $R(x) = x$ (curve +),
    $R(x)=\frac{1}{2} x(1+x)$ (curve $\times$).  }}
\label{Fig_fronts_a0}
\end{figure}

\par The results are displayed in the figure~\ref{Fig_fronts_a0},
positive feedback on the left panels, negative feedback on the right
panels. The top panels in the figure~\ref{Fig_fronts_a0} exhibit the
solution profile for $\epsilon= 0.1, 0.01, 0.001$, and the bottom
panels display $\epsilon$ Vs $\epsilon \log(T_\epsilon)$.

The solution profile displayed in figure~\ref{Fig_fronts_a0}) (top
panels) show that when $\eta(0) \neq 0$, for both positive (top-left
panel) and negative (top-right panel) feedbacks, the oscillations are
not square-wave-like even for very small $\epsilon$, in contrast to
what happens for constant delay.

The bottom panels of figure~\ref{Fig_fronts_a0} show that when
$\eta(0) \neq 0$, oscillatory transient duration grows slowly when
$\epsilon$ converges to zero. For positive feedback
(figure~\ref{Fig_fronts_a0}(c)), when $R(x) =\frac{1}{2} x(1+x)$ (curve
$\times$), DITOs' duration never exceeds a few units of time for
$\epsilon > 0.001$; for $R(x)=\cos(x)$ (curve +) DITOs' duration tends
to $+ \infty$ but it does not grow as $e^{\frac c \epsilon}$ when
$\epsilon \rightarrow 0$ (the function $y(\epsilon) = \epsilon
\log(T)$ satisfies $y(\epsilon) \rightarrow 0$ when $\epsilon
\rightarrow 0$), and they are not metastable in this sense. For
negative feedback (figure~\ref{Fig_fronts_a0}(d)) the DITOs' duration
does not grow as $e^{\frac c \epsilon}$ when $\epsilon \rightarrow 0$,
meaning that DITOs are not metastable. Moreover, we can see that in
the negative feedback case the DITOs' duration depends very little on
$R(x)$, the curves $\epsilon$ Vs $\epsilon \log(T_\epsilon)$ being
almost identical for $R(x)=x$ and $R(x)=\frac{1}{2} x(1+x)$ (see
figure~\ref{Fig_fronts_a0}(d)).

\subsection{Case $\eta(0)=0$ with $0< \eta^\prime(0) < + \infty$ }\label{etaprimeinfty}

\par In this case, in addition to the Hopf bifurcation theorem,
Cooke's rescaling argument applies to Eq. (\ref{eqmain}) and one
expects that the rapidly oscillating solutions have large amplitude
when $\epsilon$ tends to zero.  We numerically observed that
metastable oscillations are almost $1+\epsilon \rho$ periodic (the
zeroes drift-speed $\rho$ depends on the feedback function $f$ and the
delay function $r$), and that the constants $\rho$ are coherent with
the corresponding constants in section \ref{subsection_tle_alpha=1}.
This period estimate is crucial for obtaining the transition layer
equation for the state dependent DDE (see section \ref{sec_tle},
equation (\ref{eqTLE}) ). Here we have taken $\eta(\epsilon) =
\epsilon$, so that $r(x, \epsilon) = 1 + \epsilon R(x)$. Metastability
is expected for any delay function $R$ in the negative feedback case
(symmetric and non-symmetric), while for the symmetric positive
feedback case the delay function $R$ must be even so as to guarantee
the symmetry condition.

\par Figure \ref{Fig_fronts_a1} displays the numerical results for
Eq. (\ref{eqmain}) when $\eta(\epsilon)=\epsilon $, for positive
feedback (panels on the left) and negative feedback (panels on the
right). The top and middle panels of figure~\ref{Fig_fronts_a1}
display the DITOs' profiles for $R(x)=\cos(x)$ and a
few $\epsilon$ values. The bottom panels of figure~\ref{Fig_fronts_a1}
display the transient duration ($\epsilon$ Vs $\epsilon
\log(T_\epsilon)$) for $R(x)=0$ (constant delay), $R(x)=\cos(x)$,
$R(x)=x$, $R(x)=$, $R(x)=\frac{1}{2} x(1+x)$, $R(x)= x²$.

The top panels of figure~\ref{Fig_fronts_a1} show that for both,
positive and negative feedback cases, the oscillatory solutions have a
square-wave-like shape when $\epsilon$ goes to zero, likewise the
constant delay case. Figure~\ref{Fig_fronts_a1}(a) for positive feedback
shows that, as $\epsilon \rightarrow 0$, the square-wave-like solution
has period $1 + c \epsilon$ (at first
order). Figure~\ref{Fig_fronts_a1}(b) for negative feedback shows that,
as $\epsilon \rightarrow 0$, the square-wave-like solution has period
$2 + c^\prime \epsilon$ (at first order).

\begin{figure}[th!]
\vspace{-0.2cm}
\begin{center}
\includegraphics[width=7.5cm]{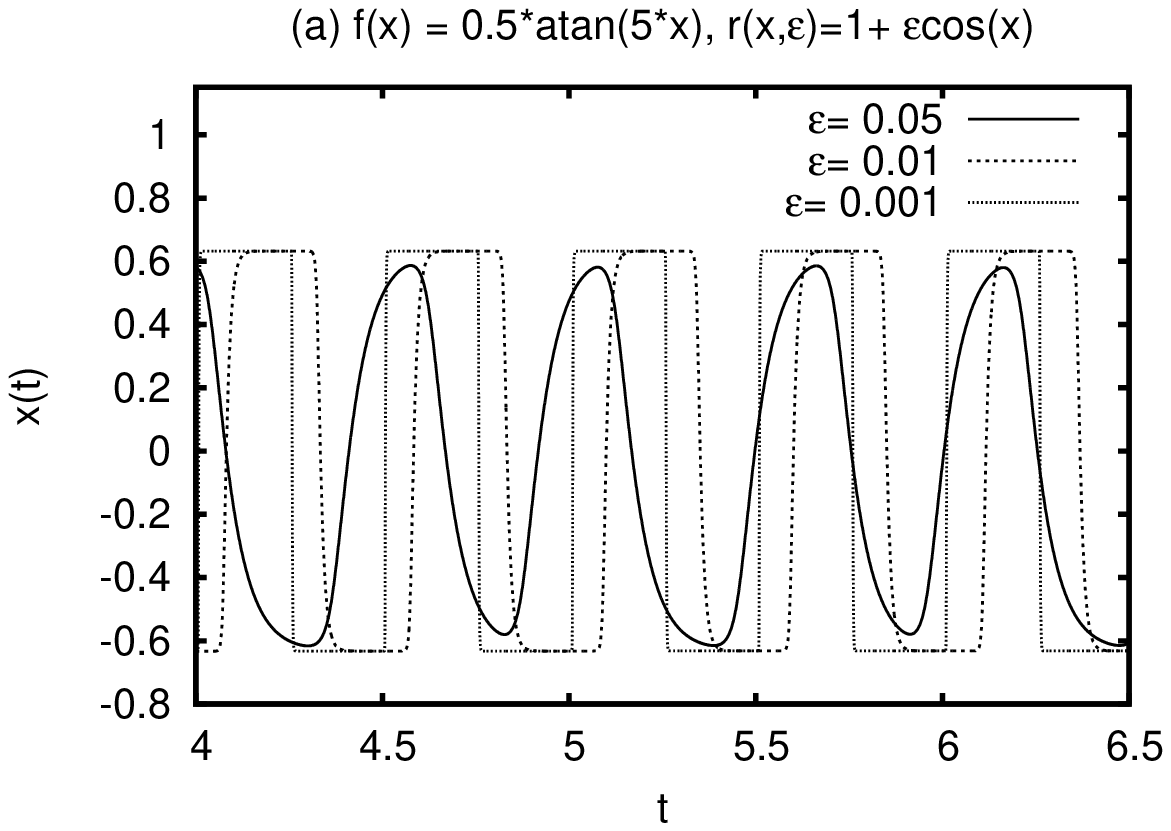} 
\includegraphics[width=7.5cm]{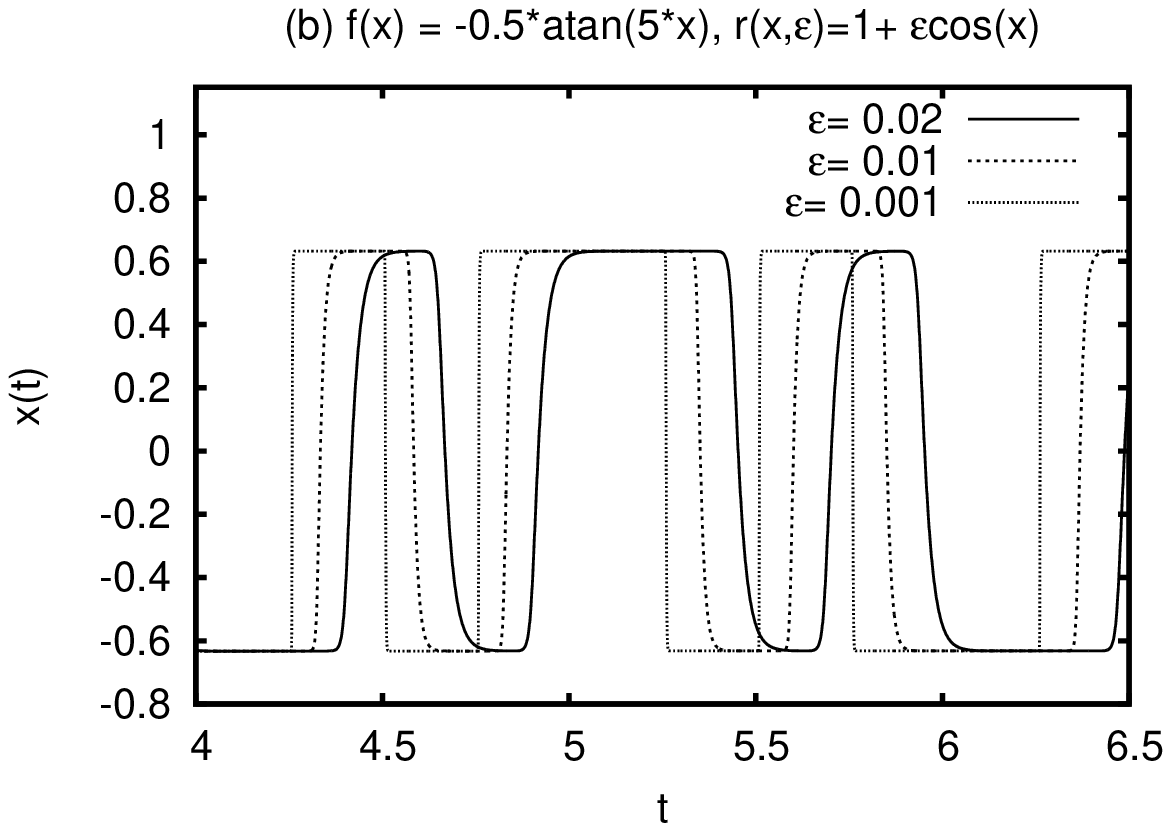} 
\end{center}
\vspace{-1.0cm}
\begin{center}
\includegraphics[width=7.5cm]{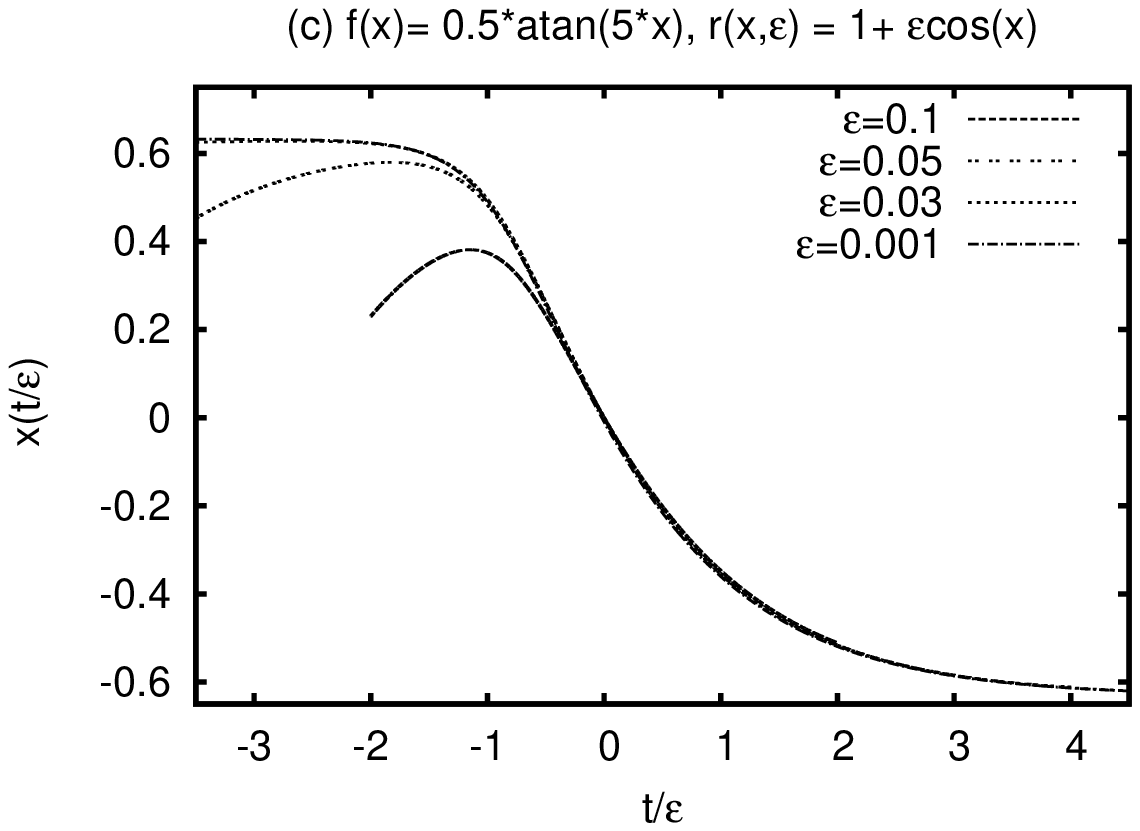} 
\includegraphics[width=7.5cm]{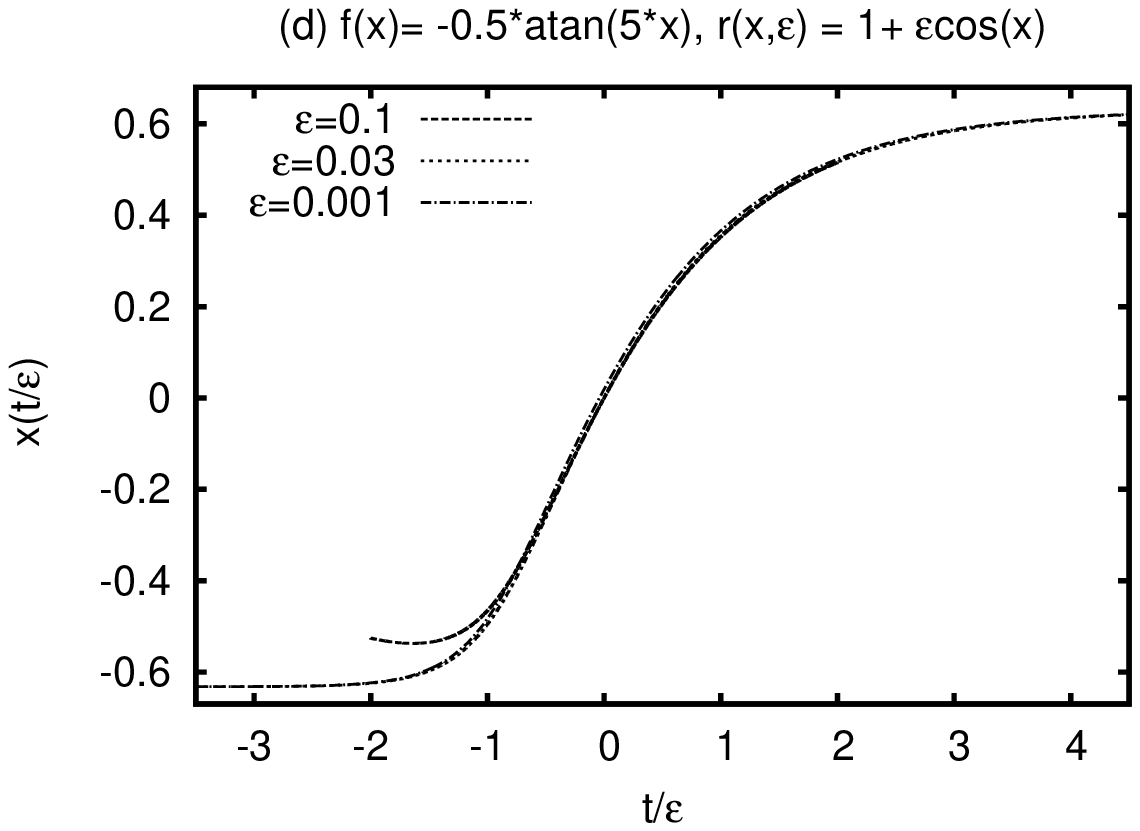} 
\end{center}
\vspace{-1.0cm}
\begin{center}
\includegraphics[width=7.5cm]{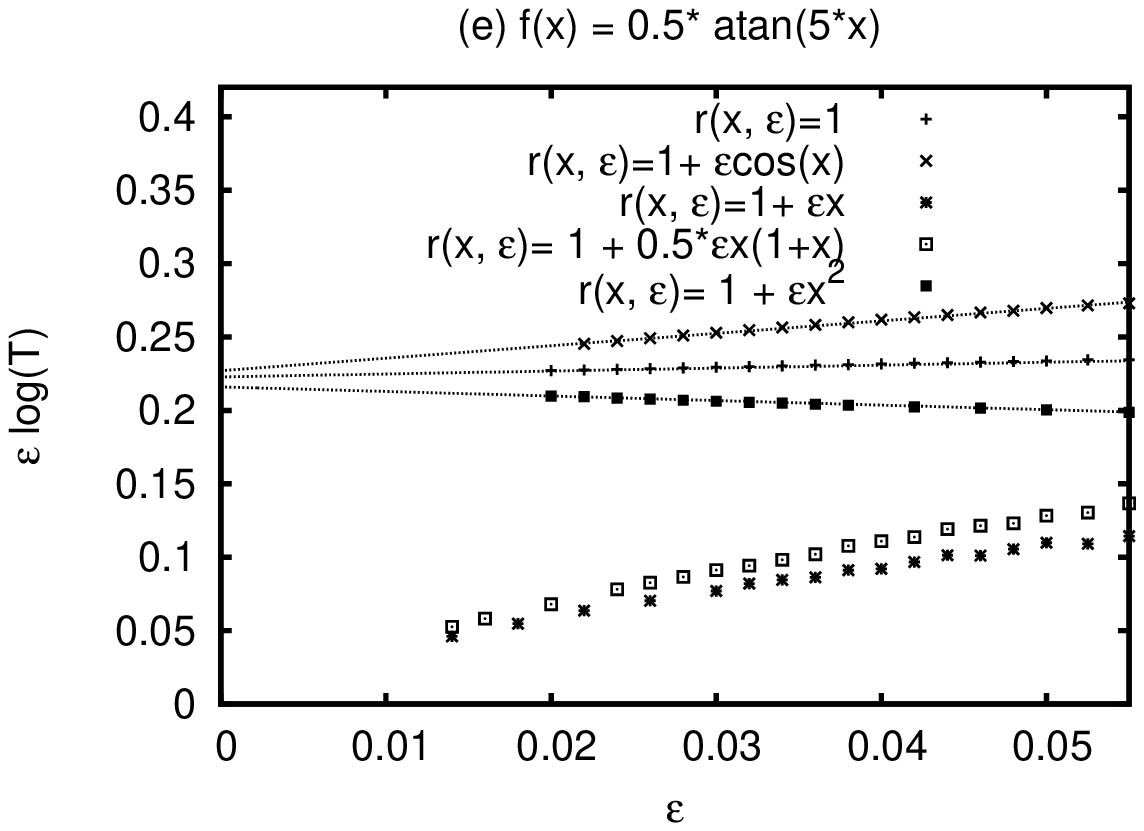} 
\includegraphics[width=7.5cm]{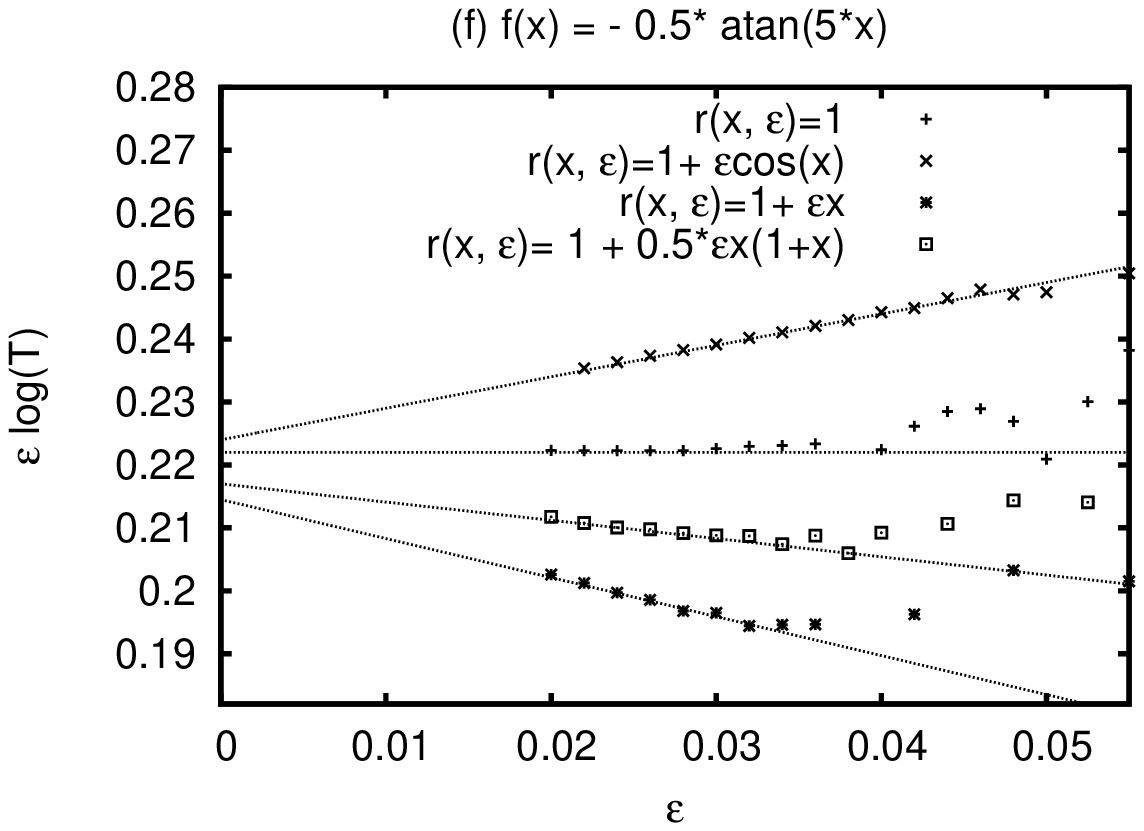} 
\end{center}
\vspace{-0.7cm}
\caption{{\small Profile of solutions of (\ref{eqmain}) (top and
    middle panels), and oscillatory transient duration (bottom
    panels), when $\eta(\epsilon)=\epsilon$, for positive feedback
    $f(x) = \frac 12 \arctan(5x)$ (panels on the left), and negative
    feedback $f(x) = - \frac 12 \arctan(5x)$ (panels on the right).
    {\bf Top panels:} solution profile for $R(x)=\cos (x)$ and
    decreasing $\epsilon$. {\bf Middle panels:} descending and
    ascending jumps obtained by the time scale $s=\frac{t}{\epsilon}$
    of the solutions profile for $R(x)=\cos (x)$, varying $\epsilon$:
    (c) positive feedback; (d) negative feedback. The curves for
    $\epsilon=0.03$ and $\epsilon=0.001$ are too close to be
    distinguished. {\bf Bottom panels:} (e) transient duration for
    positive feedback with delays $R(x)=0$, $R(x)=\cos (x)$, $R(x)=x$,
    $R(x)=\frac{1}{2} x(1+x)$, $R(x)=x²$; (f) transient duration for
    negative feedback with delays $R(x)=0$, $R(x)=\cos (x)$, $R(x)=x$,
    $R(x)=\frac{1}{2} x(1+x)$.  }}
\label{Fig_fronts_a1}
\end{figure}

\par The middle panels in figure \ref{Fig_fronts_a1} display the
square-wave-like oscillation after rescaling the time, $s=
\frac{t}{\epsilon}$. The time-rescaled profiles displayed in the
middle panels of figure \ref{Fig_fronts_a1} show convergence to a
limit profile when $\epsilon$ converges to zero, for both positive and
negative feedback, in agreement with the results in section
\ref{sec:periodic_sol}.  As explained in section
\ref{sec:periodic_sol}, after the appropriate time rescaling $s=
\frac{t}{\epsilon}$, the square wave shape converges to some limit
profile, which is solution of the transition layer problem
(\ref{eqTLE}).  Convergence to transition layer profiles occurs for
both increasing and decreasing rescaled jumps, for both positive and
negative feedbacks.

\par An approximation of the unknown constant $\rho= \rho^{\pm}$ in
problem (\ref{eqTLE}) can be obtained from the numerical approximate
period of metastable oscillations : $T \approx 1+ \epsilon \rho$.  We
have checked that the limit profiles agree
with the solutions of the transition layer equations obtained as
attractive fixed point of an appropriate operator $\mathcal T$, up to
numerical error of order $O(dt / \epsilon) $, where $dt$ is the
discretization parameter (see appendix section \ref{app_tlo}).
Due to regularity properties of the operator $\mathcal T$, this implies 
that the corresponding constants $\rho$ also agree at the same order.
 
\par The figure \ref{Fig_fronts_a1}f shows that
in the negative feedback case, metastable oscillatory patterns are
observed regardless of the delay function $R$.  For negative feedback
with constant delay ($R(x)=0$), the curve + in figure
\ref{Fig_fronts_a1}(f) shows that $T(\epsilon) \sim
\exp({\frac{c}{\epsilon}})$, for some non-zero constant $c$, as
expected. This same panel (f) shows that, for state dependent delay,
$\epsilon \log(T(\epsilon)) \approx c_{1} + c_{2} \epsilon +
o(\epsilon)$, implying that $T(\epsilon) =
\exp(\frac{c_{1}}{\epsilon}) \exp(c_{2}) (1 + o(1)) $.  The value of
constants $c_{1}$ and $c_{2}$ depends on the delay function $R$, but
in all cases $c_{1} > 0$ implying that the state-dependent DITOs' are
metastable.

\par State-dependent DITOs are metastable if and only if  the function
$y(\epsilon) = \epsilon \log(T)$ has a non-zero limit when $\epsilon
\rightarrow 0$.  Figure \ref{Fig_fronts_a1}e shows that in the
positive feedback case, the DITOs' are metastable only for odd
functions $f$ and even delay functions $R(x)$.  The curve + in the
figure \ref{Fig_fronts_a1}(e) shows that, for constant delay,
$T(\epsilon) \sim \exp({\frac{c}{\epsilon}})$, for some non-zero
constant $c$, as expected.  In the case of state dependent delay, when
the delay function is even, the curves for $R(x)=\cos(x)$, $R(x)=x^2$
in the figure \ref{Fig_fronts_a1}(e) show that $\epsilon
\log(T(\epsilon)) \approx c_{1} + c_{2} \epsilon + o(\epsilon)$,
meaning that $T = \exp(\frac{c_{1}}{\epsilon}) \exp(c_{2}) (1 + o(1))$
so that the state-dependent DITOs' are metastable.  In contrast, when
$R(x)$ is not even (cases $R(x) = \sin(x)$ , $R(x) = x + \frac{1}{2}
x^{2}$), figure \ref{Fig_fronts_a1}(e) shows that $\epsilon
\log(T(\epsilon)) \underset{\epsilon \rightarrow 0}{ \longrightarrow}
0$ so that $T(\epsilon)$ is not of order $\exp(\frac{c}{\epsilon})$,
implying that the state-dependent DITOs' are not metastable.

\subsection{Metastability induced by state dependent delay}
\label{metaasympf}

As already emphasized, for positive feedback $f$, DDE~(\ref{eqmain})
with constant delay exhibits metastability only if $f$ is odd.  The
constants $\rho^+$ and $\rho^-$ that solve the TLE problem are equal
in the case of odd positive feedback $f$ with constant delay, implying
metastability. When the positive feedback $f$ is not symmetric, the
condition $\rho^+ = \rho^-$ does not hold for constant delays, and
DITOs are not metastable. Nevertheless, in
subsection~\ref{subsec_meta} we have shown that for non-symmetric
positive feedback $f$, and state dependent delay $r(x,\epsilon)= 1 +
\lambda \epsilon R(x) $, for a critical value $\lambda=\lambda_c$, the
transition layer equations (\ref{eqTLElambda}) do have solutions such
$\rho^+=\rho^⁻$, provided $R(x)$ is not even (see
figure~\ref{Fig_rho_3}). Therefore, metastability has been induced by
introducing the appropriate state-dependence to DDE~(\ref{eqmain})
with non-symmetric positive feedback.  

In this section, we present the numerical solutions of equation
(\ref{eqmain}) for the same case analyzed in
section~\ref{subsec_meta}: $f(x)= \frac 12 \arctan(5(x+0.05)) -
\frac12 \arctan(0.25)$, with state-dependent delay $r(x,\epsilon)= 1 +
\lambda \epsilon R(x)$, $R(x)$ not even.  Panels (a) and (c) of
figure~\ref{fig:MetaS_sdd} display the results for the case
$R(x)=\frac 12 \epsilon x (1+x)$ (corresponding to
figure~\ref{Fig_rho_3}(d)); panels (b) and (d) display the results for
$R(x)=x$ (corresponding to figure~\ref{Fig_rho_3}(b)).

The fronts and the values of $\rho^+$ and $\rho^-$ change
continuously with $\lambda$.  In other words, simulations over fixed
durations of
DDE solutions for $\lambda$ in the vicinity of $\lambda_c$  are
similar because for $\lambda$ close to $\lambda_c$, the difference
$\rho^+-\rho^-$ is small and the DITOs are long lasting.  This
similarity not withstanding, the transient regime durations scale
differently with $\epsilon$ at $\lambda_c$ and nearby values
$\lambda$. Only at $\lambda_c$ the system becomes metastable in the
sense that the DITOs last for exponentially long times. This
difference is illustrated in the figures that show transient regime
duration at fixed $\lambda$ for various $\epsilon$ and the reverse,
i.e. at fixed $\epsilon$ for various values of $\lambda$.

The panels (a) and (b) of figure~\ref{fig:MetaS_sdd} display
$\epsilon$ Vs $1/\log(T(\epsilon)$, for three values of $\lambda$,
and we can see that when the parameter $\lambda$ is larger or smaller
than the critical value $\lambda_c$, the curves $y(\epsilon) = 1 /
\log (T_\epsilon)$ are very steep when $\epsilon \rightarrow 0$,
indicating that DITOs are not metastable in those cases. In contrast,
when $\lambda = \lambda_c$, the curves $y(\epsilon)$ have a bounded
slope as $\epsilon \rightarrow 0$, indicating that DITOs are
metastable. We have obtained $\lambda_c \approx 1.05$ for the case
$R(x)=\frac 12 \epsilon x (1+x)$ (figure~\ref{fig:MetaS_sdd}(a)), and
$\lambda_c \approx 0.55$ for the case $R(x)=x$
(figure~\ref{fig:MetaS_sdd}(b)). The difference between these values
of $\lambda_c$, and those values of $\lambda_c$ found in subsection
\ref{subsec_meta}, is smaller than $10^{-1}$, which is of the order of
numerical precision for these parameters.

The panels (c) and (d) of figure~\ref{fig:MetaS_sdd} display $\lambda$
Vs $\epsilon \log (T_{\epsilon})$, for three values of $\epsilon$, and
we can see that when $\epsilon$ is fixed, there is a unique value
$\lambda^\epsilon$ such that DITOs' duration is maximal, and it
converges to the critical value $\lambda_c$ as $\epsilon$ decreases.

The results in this subsection agree with the results in subsection
\ref{subsec_meta}, confirming the possibility of state-dependence of
the delay inducing metastability when the constant delay case does not
exhibit metastability. 

\begin{figure}[h!t]
\begin{center}
\includegraphics[width=7.5cm]{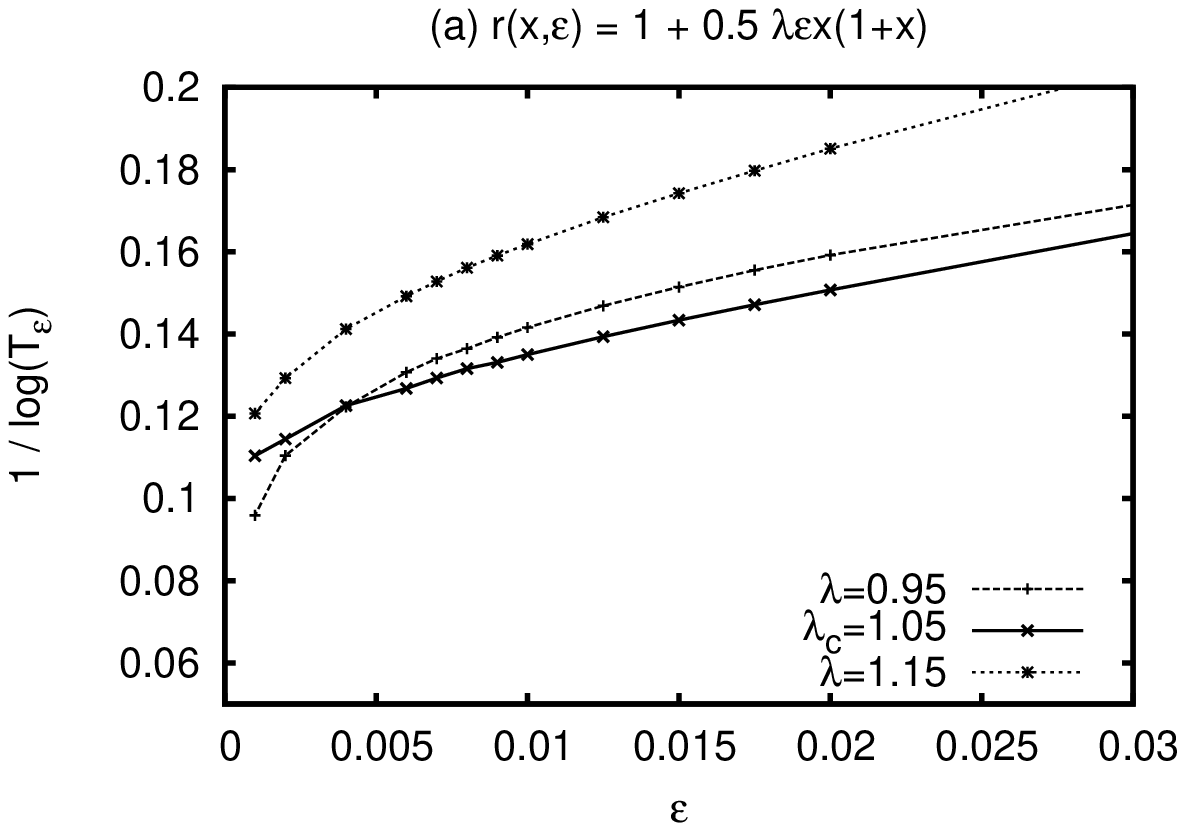}
\includegraphics[width=7.5cm]{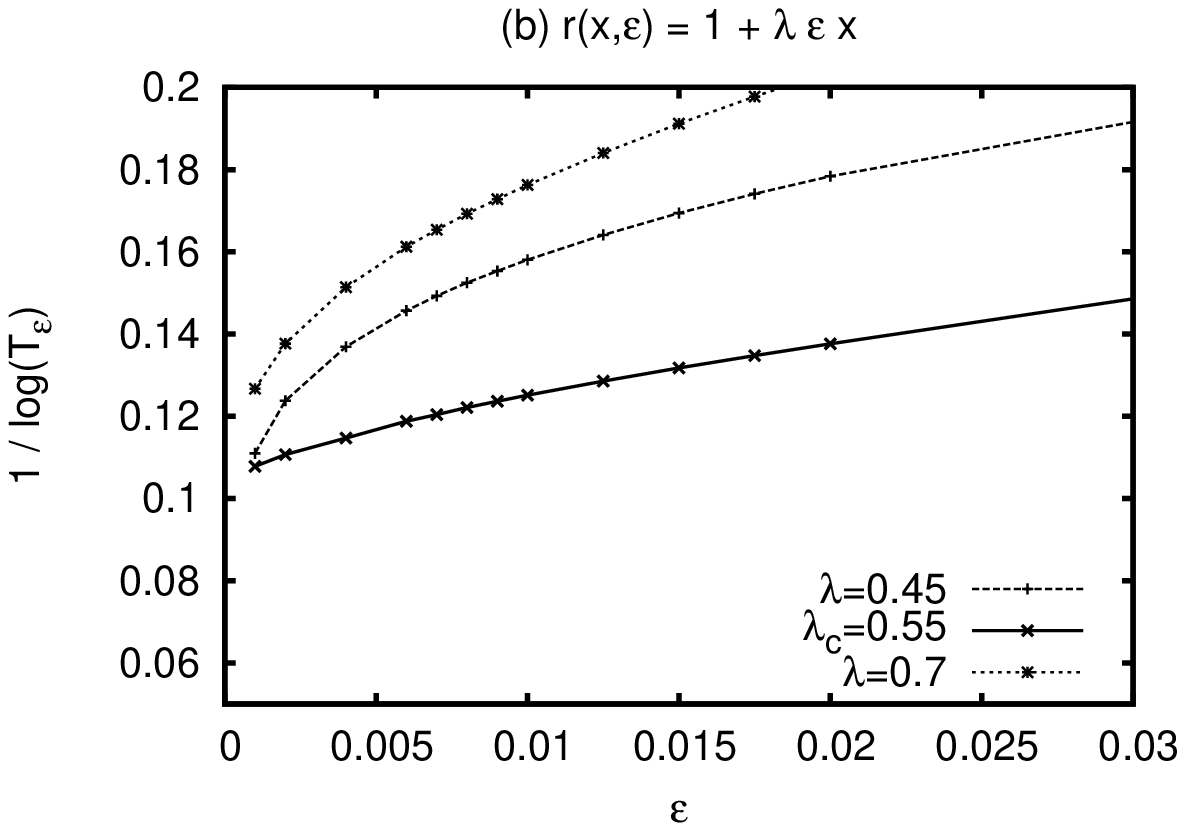}
\end{center}
\vspace{-0.9cm}
\begin{center}
\includegraphics[width=7.5cm]{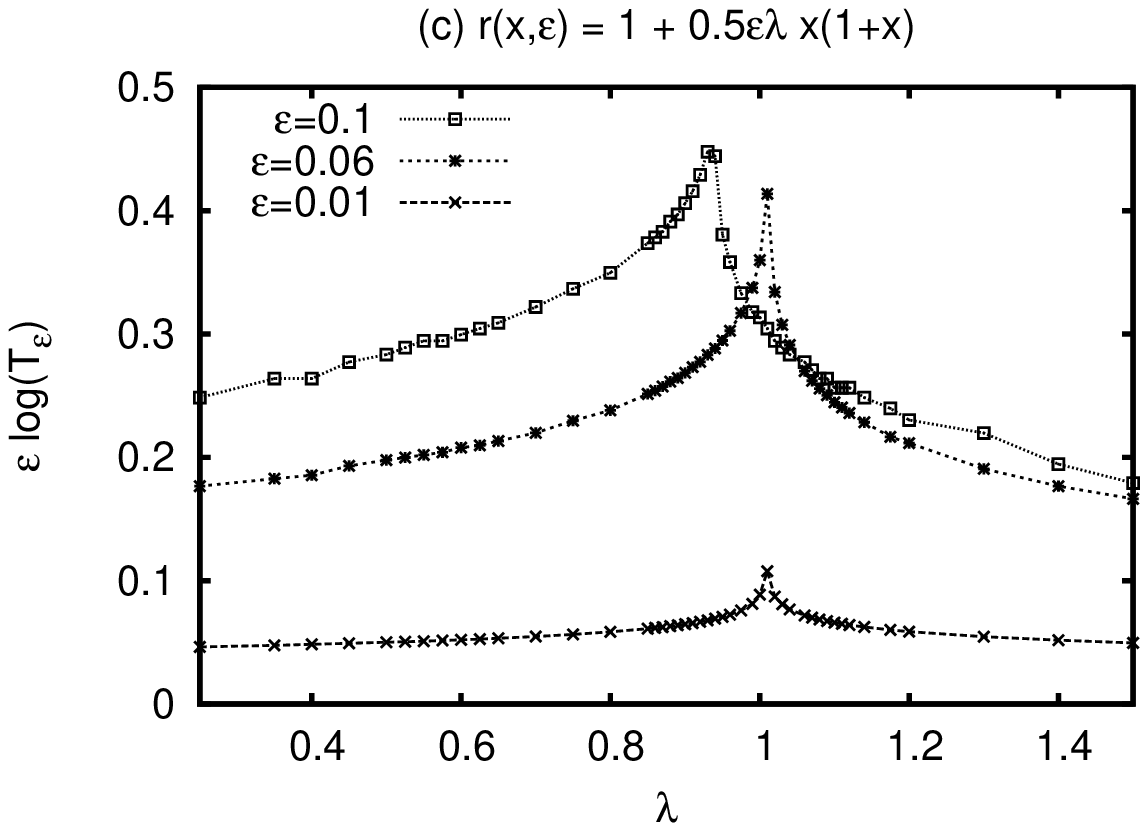}
\includegraphics[width=7.5cm]{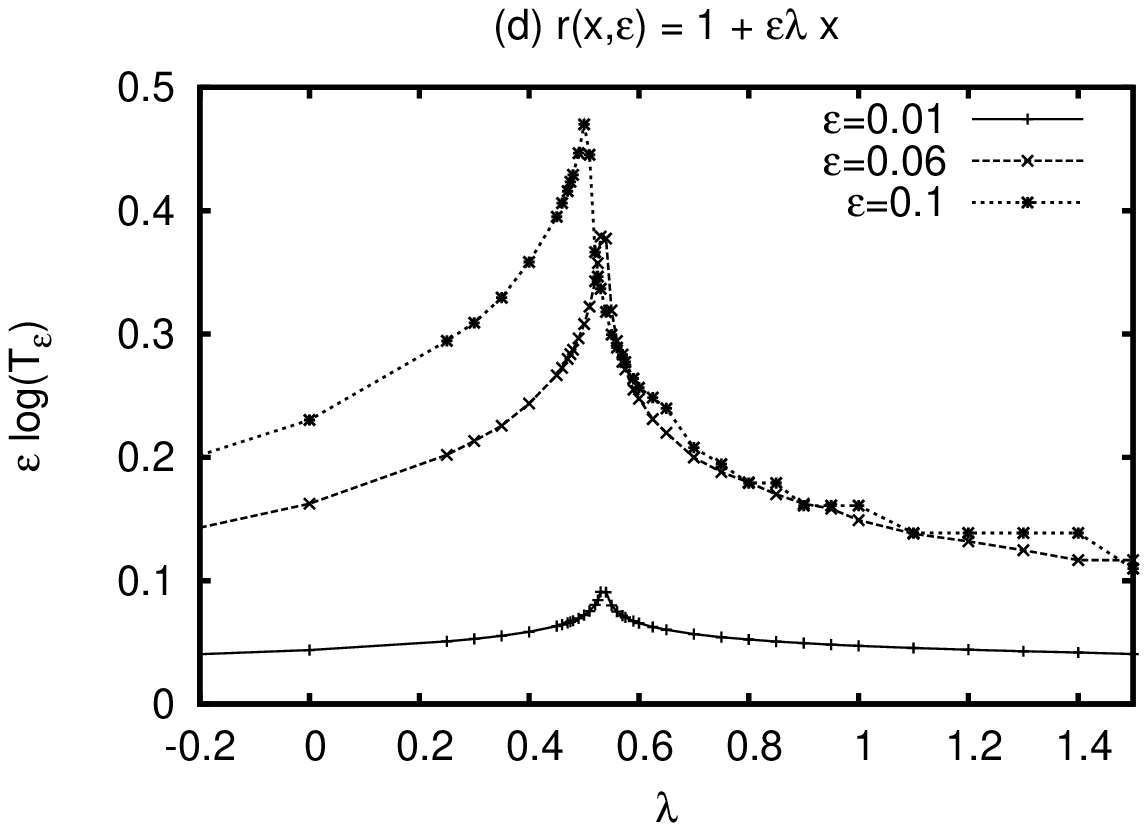}
\end{center}
\vspace{-0.7cm}
\caption{\small{ DITO duration for positive non-symmetric feedback
    $f(x) = \frac 12 \arctan(5(x+0.05)) - \frac 12 \arctan(0.25)$,
    with state dependent delay $r(x,\epsilon)= 1 + \lambda \epsilon
    R(x)$. Panels (a) and (c): $R(x)=  \frac 12 x(1+x)$;  panels (b)
    and (d):  $R(x)=x$.
}}
 \label{fig:MetaS_sdd}
\end{figure}

\subsection{Case $\eta(0)=0$ with $\eta^\prime(0) =0$ }
\label{etaprime0}
\par When $\eta^\prime(0)=0$, likewise the case $0 < \eta^\prime(0) < \infty $,
oscillations have a square wave shape, a ``period'' $T=1 + \epsilon
\rho$, and a rescaled limit transition-layer profile (figures not
shown).  The corresponding transition layer equation (\ref{eqTLEcst})
is independent of the function $\eta$, and it is the same transition
layer equation as for the constant-delay case $r(x,\epsilon)=1$.  

\par For given feedback $f$ and delay $R(x)$, if metastability occurs
when $0 < \eta^\prime(0) < + \infty$, our numerical investigation
indicates that it will also occur for $\eta^\prime(0)=0$. We have
computed the transient duration for the symmetric positive feedback
$f(x) = \frac 12 \arctan(5x)$, with delay $r(x, \epsilon) = 1 +
\eta(\epsilon) R(x)$, $\eta(\epsilon)= {\epsilon}^{\alpha}$, $\alpha>1$.

Figure~\ref{Fig_metaS_a2PF}~(a) displays $\epsilon$ Vs $\epsilon
  \log(T_\epsilon)$ for the state dependent delay $R(x)=\frac 12 x
  (1+x)$, and $\alpha= 1.5, 2, 3$. It shows that
  $\epsilon \log(T)$ converges to zero when $\epsilon \rightarrow 0$,
  regardless of $\alpha$, implying that for symmetric positive
  feedback the DITOs are not metastable if $R(x)$ is not even.
 
  Figure~\ref{Fig_metaS_a2PF}~(b) displays $\epsilon$ Vs $\epsilon
    \log(T_\epsilon)$ for the state dependent delay $R(x)=x^2$ (even
    function), and $\alpha= 2, 3$.  It shows that $\epsilon \log(T)$
    does not converge to zero when $\epsilon \rightarrow 0$, implying
    that for symmetric positive feedback the DITOs are metastable if
    $R(x)$ is even. From Figure~\ref{Fig_metaS_a2PF}~(b) we can also
      see that as $\ap$ increases, the duration $T$ of transient
      oscillations is closer and closer to the duration of the
      transient for the constant delay case.

For negative feedback the same result holds: metastable DITOs' for the
case $0 < \eta^\prime(0) < + \infty$ implies
 metastable DITOs' for the case $\eta^\prime(0)=0$ .

\begin{figure}[th!]
\begin{center}
\includegraphics[width=7.5cm]{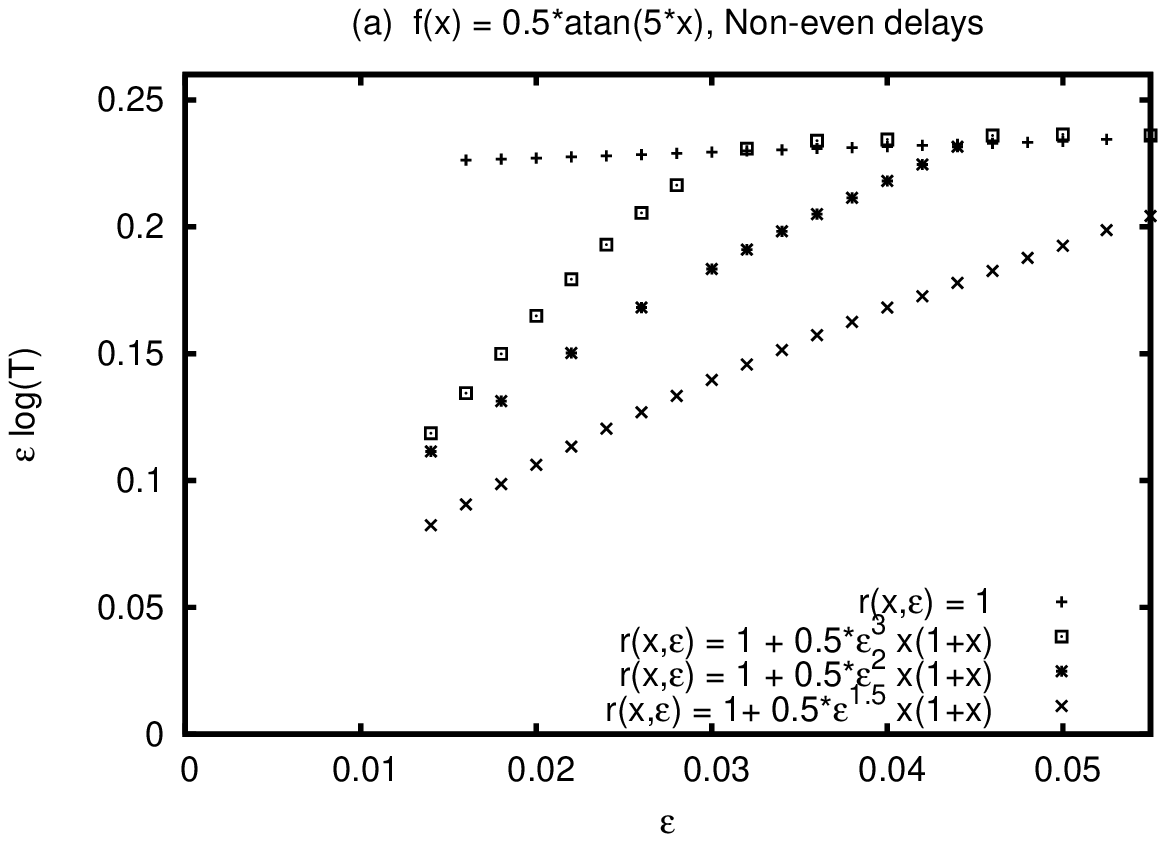} 
\includegraphics[width=7.5cm]{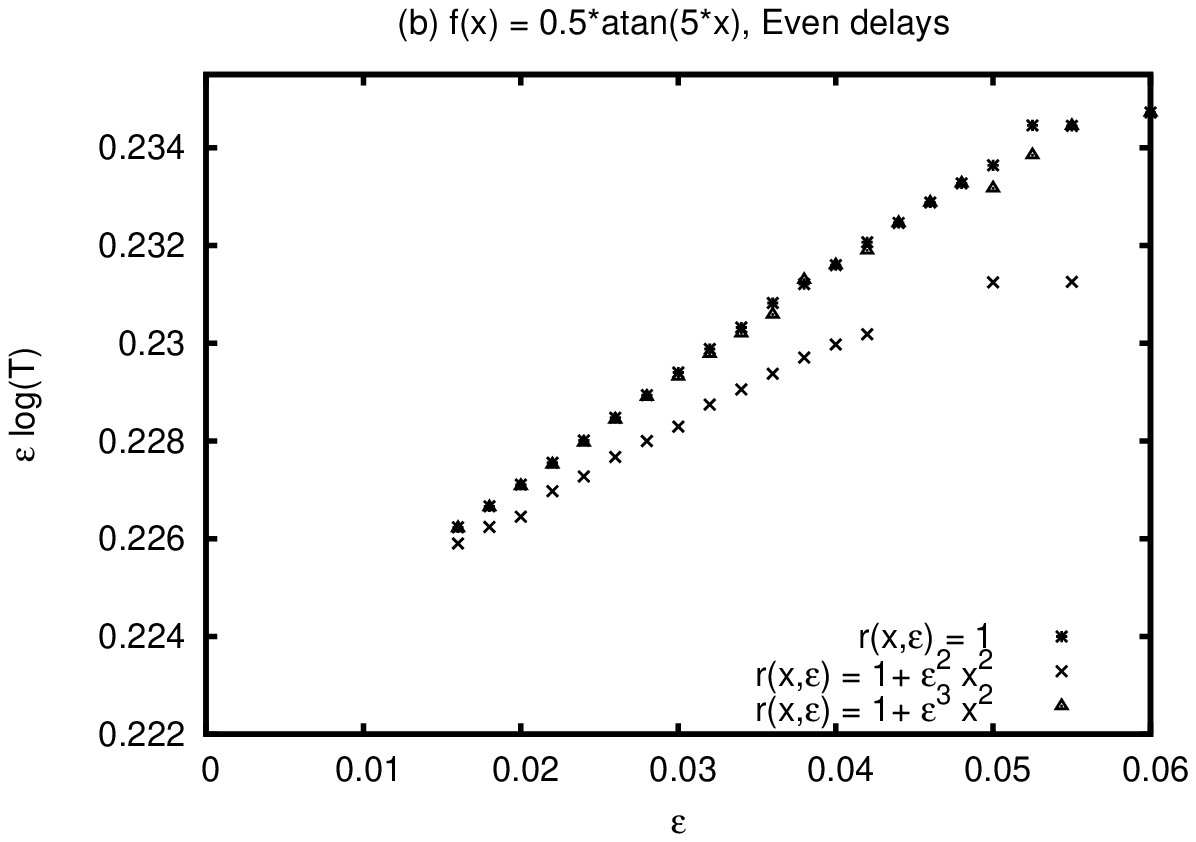} 
\end{center}
\vspace{-1.0cm}
\caption{\small{ Oscillatory transient duration for symmetric positive
    feedback $f(x) = \frac 12 \arctan(5x)$, with constant delay (curve
    +), and state dependent delay $r(x, \epsilon) = 1 + \eta(\epsilon)
    R(x)$, $\eta(\epsilon)=\epsilon^\alpha$, $\alpha > 1$: (a) $R(x)=
    \frac 12 x (1+x)$, $\alpha= 1.5, 2, 3$; (b) $R(x)= x^2$, $\alpha=
    2, 3$.  }}
\label{Fig_metaS_a2PF}
\end{figure}

\subsection{Case $\eta(0)= 0$ with $\eta^\prime(0)= + \infty$}

When $\eta(0)=0$, with $\eta^\prime(0) = + \infty$ oscillations have a
square-wave-like shape and an approximate period $T=1 +
\epsilon^{\alpha} \rho$ when $\epsilon \rightarrow 0$ (figures not
shown).  The usual time rescaling $s=\frac{t}{\epsilon}$ does not give
converging profiles, but the scaling $s=\frac{t}{\epsilon^{\alpha}}$
does.  As illustration, we display in figure \ref{Fig_front_a05} the
results for the negative feedback $f(x) = - \frac 12 \arctan(5x)$,
with $r = 1 + \epsilon^{1/2}\cos(x)$.  At the time scale $ t
\epsilon^{-1/2} $, we observe convergence of oscillating solutions
jumps to a limit transition layer profile when $\epsilon \rightarrow
0$.

\begin{figure}[th!]
\begin{center}
\includegraphics[width=7.5cm]{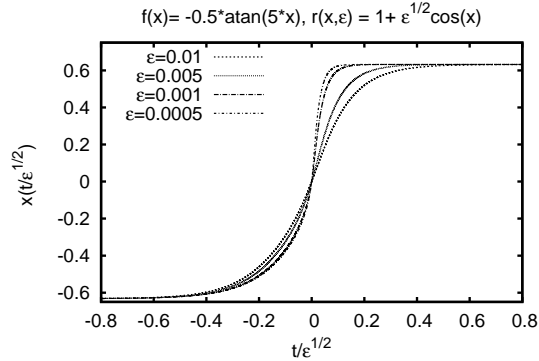} 
\end{center}
\vspace{-1.0cm}
\caption{\small{ Convergence of rescaled jumps of oscillating
    solutions in the limit $\epsilon \rightarrow 0$, when $\eta(0)=0$
    with $\eta^\prime(0) = + \infty$.  We used negative feedback $f(x)
    = - \frac{1}{2} \arctan(5x)$, with $r = 1 + \epsilon^{1/2}\cos(x)$.}}
 \label{Fig_front_a05}
\end{figure}

\par Here neither Mallet-Paret and Nussbaum's nor Cooke's argument
apply, so that we have no indication on whether rapidly oscillating
periodic solutions (which are expected because of Eichmann's Hopf
bifurcation theorem) have large or small amplitudes.  Numerical
simulations have shown that oscillatory transients can last for a very
long time, but due to numerical difficulties it is not clear whether
this transient duration is exponential, hence we cannot say whether
these long lasting oscillatory transients are indeed metastable or
not.

\section{Discussion and conclusion}\label{sec_conclusion}

\par At present time, no global geometric
characterization of the organization of the phase portrait of
scalar state dependent DDEs is at hand.  Even basic results, such as
the Hopf bifurcation theorem, have only recently been established.
However, based on current knowledge and systematic numerical explorations, 
it is possible as done in section \ref{sec:periodic_sol} to conjecture that 
the phase portraits of DDEs with state dependent delays have the same geometrical
organization as those of constant delays.  
Furthermore, it is possible to discuss the putative occurrence of
metastability based upon  available information regarding rapidly oscillating
periodic solutions, and educated guesses supported by numerical
investigations.

\par In this paper we have shown
that metastable oscillating solutions can exist in singularly
perturbed DDEs with state dependent delays of type (\ref{eqmain}). Based both on
mathematical and numerical results, we have been able to link the
properties of Eq.  (\ref{eqmain}), and its solutions, to the occurrence
of oscillatory metastable transients.

\par Such metastable transients have never been observed when the delay
$r=r(x)$ does not depend on the singular parameter $\epsilon$. The main reason
for the short duration of such DITOs'  is the
decreasing size of rapidly oscillating periodic oscillations. 
Considering delays of the form
$r(x,\epsilon)=1 + \eta(\epsilon)R(x)$, we found that the scaling
parameter value $\eta(\epsilon)$ is crucial for the existence of
metastable transients.  When $\eta(0)\neq 0$ exponentially long
lasting oscillatory transients were never observed, even for very
small $\epsilon$.  When $\eta(0)=0$ and $\eta^\prime(0) = \infty $
long lasting oscillatory transients can be observed, but due to
numerical difficulties one cannot conclude whether these transients
are indeed metastable or not.  For $\eta(\epsilon) = \epsilon^\alpha$,
the larger $\alpha$ is, the longer transient oscillations will last.
When $\eta(0)=0$ and $0 \le \eta^\prime(0) < \infty $ metastable
oscillatory transients can always be observed for negative feedback,
while for positive feedback  some symmetry condition must be satisfied
by  the feedback function $f$ and the delay function $R$.

\par There are two main tools to analyze metastability phenomena.  With a
geometric approach, one can look for the existence of a global
attractor containing a ``cascade'' of unstable periodic orbits and
heteroclinic connections between them; and with a dynamical approach
one can investigate transition layer equations that describe the
asymptotic shape of the oscillations when $\epsilon$ converges to
zero.

\par We have shown that, with few hypotheses on the functions $f$ and
$r$, a Hopf bifurcation theorem of Eichmann applies to Eq.
(\ref{eqmain}). This implies the existence of a sequence of Hopf
bifurcations as $\epsilon$ converges to zero, meaning that a global
attractor with the previously described structure might exist in many
cases. Nevertheless, metastable oscillations cannot be observed in general.
Our numerical investigation has shown that when the delay function $r$ does not
converge to a constant as $\epsilon$ converges to zero (see
figure~\ref{Fig_fronts_a0} in subsection~\ref{etaneq0}),
the amplitude of periodic solutions has to
converge to zero as the period converge to zero, and metastable
oscillations were not observed.  This suggests that not only rapidly
oscillating but also large amplitude periodic solutions are needed to
support metastability.

\par Furthermore, even when the delay function $r$ converges to a
constant as $\epsilon$ tends to zero (section~\ref{section_num}, cases
$\eta(0)=0$), the state dependent DITOs' are metastable only in those
cases where there exist transition layer equations similar to those
found for DDEs of type (\ref{eqmain}) with constant delay ($\alpha\ge
1$, subsections~\ref{etaprime0}, and \ref{etaprimeinfty}).  This
suggests that metastable state dependent DITOs' cannot exist unless
the oscillations have a limiting shape that are solution to a
transition layer problem of the form (\ref{eqTLE}), as $\epsilon$
converges to zero.

\par Our numerical analysis of the transient oscillations in DDE
(\ref{eqmain}) induced by state dependent delay of the form $r(x)= 1 +
\eta(\epsilon) R(x)$, has revealed that for negative feedback $f$,
metastable state dependent DITOs' exist in the same way as for the
constant delay case, while for positive feedback $f$, metastable state
dependent DITOs' exist if and only if $f$ and $R$ satisfy some
symmetry conditions. If the positive feedback $f$ is an odd function
(symmetry requirement for metastability in the case of constant
delay), then metastable state dependent DITOs' exist only if the delay
$R(x)$ is an even function.  We also have showed that by adding state
dependence to the delay, it is possible to obtain metastability for
positive feedback $f$ for which the constant delay transient
oscillations are not metastable (see subsections~\ref{subsec_meta} and
\ref{metaasympf}).

\par An important contribution of this work has been the introduction
of a novel class of transition layer equations associated with state
dependent delays. From our numerical investigations, we claim that
these equations capture two essential aspects of the dynamics of DDEs
with state dependent delays. The first is the shape of the
oscillations as the parameter $\ep$ becomes small. The second is the
drift of the oscillations. Our focus has been on monotone feedbacks,
nevertheless the transition layer equations we have introduced remain
valid for non monotone feedbacks as well.  Our paper paves the way to investigate
 the dynamics of DDEs with such feedbacks and
state dependent delays, through the novel transition layer equations.

\appendix

\section{Proof of theorem \ref{thm_hopf} }
\label{subsec_hopf}

\par The proof of theorem \ref{thm_hopf} is based on a ``Hopf
bifurcation theorem'' proven by Eichmann (Eichmann's PhD thesis
\cite{eichmann} p 81).  For $\epsilon \in ]0; 1[$ and $y \in
C^{1}([-M,0],\mathbb{R})$, let $g$ be the function $g(\epsilon,
y)=\frac{1}{\epsilon}\left ( - y(0) + f( y(-r(\epsilon, y)) ) \right
)$, such that Eq.  (\ref{eqmain}) is equivalent to
\begin{equation}\label{eq:eichmann}
x^{\prime}(t)=g(\epsilon, x_t),
\end{equation}
with the notation $x_t(s) = x(t+s)$ for all $s \in [-M,0]$.

\par The Hopf bifurcation theorem of Eichmann has three first order
derivatives hypotheses $(H_1)$ $(H_2)$ and $(H_3)$, three second order
derivatives hypotheses $(H_4)$ $(H_5)$ $(H_6)$ and three spectral
hypothesis $(L_1)$, $(L_2)$ and $(L_3)$.  For convenience of the
reader we state these hypotheses here before proving that they are
satisfied by equations (\ref{eqmain}) and (\ref{eq_delay_eta}).

Suppose that there is an open subset $U$ of $C^1([-M, 0], \mathbb R)$ such that 
\begin{itemize} 
\item $(H_1)$ the mapping $g : \, ]0, 1[ \times U \rightarrow \mathbb
  R$ is continuously differentiable,
\item $(H_2)$ for any $(\epsilon, x) \in ]0, 1[ \times U $ the second
  partial derivative $D_x g (\epsilon, x)$ can be extended to a linear
  continuous map
\begin{equation*}
D_x g (\epsilon, x) : \, C^0([-M,0], \mathbb R) \rightarrow \mathbb R
\; ;
\end{equation*}
\item $(H_3)$ the (extended) mapping 
\begin{equation*}
\begin{array}{cl}
]0, 1[ \times U \times C^0([-M,0], \mathbb R) & \rightarrow \mathbb R \\
(\epsilon, x, y) & \mapsto D_x g (\epsilon, x) y, 
\end{array}
\end{equation*} is continuous;
\item $(H_4)$ the mapping $g : \, ]0, 1[ \times \left ( U \cap C^2
  \right ) \rightarrow \mathbb R$ is twice continuously differentiable;
\item $(H_5)$ for any $(\epsilon, x) \in ]0, 1[ \times U \cap C^2 $
  the second order partial derivative $D^2_x g (\epsilon, x)$ can be
  extended to a bilinear continuous map
\begin{equation*}
  D^2_x g (\epsilon, x) : \,  C^1([-M,0], \mathbb R) \times
  C^1([-M,0], \mathbb R) \rightarrow \mathbb R \; ;
\end{equation*}
\item $(H_6)$ the (extended) mapping
\begin{equation*}
\begin{array}{cl}
]0, 1[ \times U \times C^1([-M,0], \mathbb R) \times C^1([-M,0], \mathbb R) & \rightarrow \mathbb R \\
(\epsilon, x, y, z) & \mapsto D^2_x g (\epsilon, x) (y,z) 
\end{array}
\end{equation*} is continuous, and 
\begin{equation*}
\begin{array}{cl}
]0, 1[ \times U \times C^1([-M,0], \mathbb R) & \rightarrow \mathcal L( C^2, \mathbb R) \\
(\epsilon, x, y) & \mapsto D^2_x g (\epsilon, x) (y, \cdot), 
\end{array}
\end{equation*} is continuous (where $\mathcal L( C^2, \mathbb R)$ is the space of linear functionals from $C^2([-M,0],\mathbb R)$ to $\mathbb R$).
\end{itemize}

Suppose that $g(\epsilon, 0)=0$, for any $\epsilon \in ]0, 1[$, 
and let $A(\epsilon)$ be the generator of the strongly continuous semigroup 
on $C^0([-M,0], \mathbb R)$ generated by the linearized equation
\begin{equation*}
y_t^\prime = D_x g(\epsilon, 0) y_t.
\end{equation*}

Suppose that there is a $\epsilon^* \in ]0,1[$ and some open interval
$]\epsilon^* - \eta, \epsilon^* + \eta[ \subset ]0,1[$ such that
\begin{itemize}
\item $(L_1)$ for any $\epsilon \in ]\epsilon^* - \eta, \epsilon^* +
  \eta[$, there is a simple eigenvalue $\lambda(\epsilon)$ of
  $A(\epsilon)$, such that the mapping $ \epsilon \mapsto
  \lambda(\epsilon) $ is $C^{1}(]\epsilon^* - \eta, \epsilon^* +
  \eta[, \mathbb C)$,
\item $(L_2)$ the eigenvalue $\lambda(\epsilon)$ crosses the imaginary
  axis at $\epsilon^*$ : $\Re ( \lambda (\epsilon^*)) = 0$ and $\Im (
  \lambda (\epsilon^*)) = \omega_0 > 0$, and $\frac{d
    \lambda}{d\epsilon} (\epsilon^*) \neq 0$,
\item $(L_3)$ and for any $k \in \mathbb Z - \{-1,1\}$, $\nu = i k
  \omega_0 $ is not an eigenvalue of $A(\epsilon^*)$.
\end{itemize}

Now we prove that theorem \ref{thm_hopf} is a consequence of
Eichmann's theorem. To this end we start by proving the following
Propositions \ref{prop1} and \ref{prop2}.

\begin{proposition}\label{prop1}
  Let $U$ be an open subset of $C^{1}([-M,0], \mathbb{R})$.  Suppose
  $f : \mathbb{R} \rightarrow \mathbb{R}$ is $C^1$ and $r : ]0; 1[
  \times U \rightarrow [0,M]$ is $C^1$.  Suppose that for any
  $(\epsilon, x_t) \in ]0;1[ \times U$, $\frac{\partial r}{ \partial
    y} : ]0;1[ \times C^{1}([-M,0], \mathbb{R}) \rightarrow \mathbb{R}
  $ can be extended to a continuous linear map $]0;1[ \times
  C^{0}([-M,0], \mathbb{R}) \rightarrow \mathbb{R}$, and that $
  (\epsilon, x_t, h_t) \rightarrow \frac{\partial r}{ \partial
    y}(\epsilon, x_t).h_t $ is continuous $]0;1[ \times U \times
  C^{0}([-M,0], \mathbb{R}) \rightarrow \mathbb{R}$.

Then the regularity hypotheses $H_1$, $H_2$ and $H_3$ of Eichmann's 
Hopf Bifurcation Theorem hold.
\end{proposition}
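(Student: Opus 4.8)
The plan is to verify $(H_1)$–$(H_3)$ directly for
$$g(\epsilon, y) = \frac{1}{\epsilon}\bigl(-y(0) + f(y(-r(\epsilon, y)))\bigr),$$
by writing $g$ as a composition of elementary maps and differentiating via the chain rule; the only delicate building block is the evaluation map $\mathrm{ev}:(s,y)\mapsto y(s)$, and the whole argument hinges on the contrast between its behaviour on $C^1$ and on $C^0$. For $(H_1)$ I would first show $g$ is $C^1$ on $]0,1[\times U$. Setting $\psi(\epsilon,y)=y(-r(\epsilon,y))=\mathrm{ev}\circ\Phi$ with $\Phi(\epsilon,y)=(-r(\epsilon,y),y)$, the map $\Phi$ is $C^1$ because $r$ is $C^1$ on $]0,1[\times U\subset\,]0,1[\times C^1$. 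The key claim is that $\mathrm{ev}:[-M,0]\times C^1([-M,0],\mathbb{R})\to\mathbb{R}$ is $C^1$, with $D\,\mathrm{ev}(s,y)(\sigma,h)=y'(s)\sigma+h(s)$: the $s$-derivative $y'(s)$ is continuous in $(s,y)$ since $y\mapsto y'$ is continuous $C^1\to C^0$ and evaluation is jointly continuous on $[-M,0]\times C^0$, while the $y$-derivative $h\mapsto h(s)$ depends continuously on $s$ in operator norm because $|h(s)-h(s_0)|\le|s-s_0|\,\|h'\|_\infty\le|s-s_0|$ for $\|h\|_{C^1}\le1$. Composing, $g$ is $C^1$ and the chain rule yields
$$D_y g(\epsilon, y) h = \frac{1}{\epsilon}\Bigl[-h(0) + f'(y(-r(\epsilon,y)))\bigl(h(-r(\epsilon,y)) - y'(-r(\epsilon,y))\,\tfrac{\partial r}{\partial y}(\epsilon,y)h\bigr)\Bigr].$$

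For $(H_2)$ I would inspect how this expression depends on the direction $h$ with $(\epsilon,y)$ fixed. The terms $h(0)$ and $h(-r(\epsilon,y))$ are point evaluations, hence continuous linear functionals on $C^0$; the term $\tfrac{\partial r}{\partial y}(\epsilon,y)h$ extends to $C^0$ by hypothesis; and the factor $y'(-r(\epsilon,y))$ multiplying it is a fixed scalar, since the derivative is taken with respect to $y$ and not the delay argument. Thus $D_y g(\epsilon,y)$ is a finite linear combination, with fixed scalar coefficients, of functionals each continuous on $C^0([-M,0],\mathbb{R})$, so it extends to a continuous linear map there.

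For $(H_3)$ I would prove joint continuity of $(\epsilon,y,h)\mapsto D_y g(\epsilon,y)h$ on $]0,1[\times U\times C^0$, using repeatedly that $(s,u)\mapsto u(s)$ is jointly continuous on $[-M,0]\times C^0$ (from $|u(s)-u_0(s_0)|\le\|u-u_0\|_\infty+|u_0(s)-u_0(s_0)|$ and uniform continuity of $u_0$). Concretely: $\epsilon\mapsto1/\epsilon$ is continuous; $(\epsilon,y)\mapsto r(\epsilon,y)$ is continuous, whence $(\epsilon,y)\mapsto f'(y(-r(\epsilon,y)))$ and $(\epsilon,y)\mapsto y'(-r(\epsilon,y))$ are continuous (the latter using $y\mapsto y'$ continuous $C^1\to C^0$); $(\epsilon,y,h)\mapsto h(-r(\epsilon,y))$ is jointly continuous by the evaluation estimate; and $(\epsilon,y,h)\mapsto\tfrac{\partial r}{\partial y}(\epsilon,y)h$ is jointly continuous by hypothesis. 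Sums and products of jointly continuous scalar-valued maps remain jointly continuous, giving $(H_3)$.

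The main obstacle is the $C^1$ differentiability of the evaluation map in $(H_1)$, which is exactly where the $C^1$ regularity of $U$ is indispensable: the derivative in the delay direction produces the factor $y'$, and controlling the $s$-dependence of the evaluation functional in operator norm requires the $C^1$ norm on the direction $h$. Once this is established, $(H_2)$ and $(H_3)$ follow with only routine verifications, because the direction $h$ enters $D_y g$ solely through $C^0$-continuous operations, the $y'$ factor having collapsed to a fixed scalar.
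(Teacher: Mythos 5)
Your proof is correct and follows essentially the same route as the paper: compute the partial derivatives of $g$ explicitly, observe that the direction $h$ enters $D_y g(\epsilon,y)h = \frac{1}{\epsilon}\bigl[-h(0) + f'(y(-r(\epsilon,y)))\bigl(h(-r(\epsilon,y)) - y'(-r(\epsilon,y))\,\frac{\partial r}{\partial y}(\epsilon,y)h\bigr)\bigr]$ only through point evaluations and $\frac{\partial r}{\partial y}$, and read off $(H_1)$--$(H_3)$. The only difference is one of rigor: the paper merely writes down the derivative formulas and asserts their continuity, whereas you supply the underlying lemma that makes them valid, namely that the evaluation map is $C^1$ on $[-M,0]\times C^1([-M,0],\mathbb{R})$ while its $y$-derivative extends to $C^0$ only as a jointly continuous (not operator-norm continuous) map --- which is precisely the distinction that the hypotheses $(H_2)$ and $(H_3)$ are designed to capture.
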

Here, $\frac{\partial}{\partial \epsilon}$ and
$\frac{\partial}{\partial y}$ denote Frechet derivatives.
\begin{proof}
One finds 
$$\frac{\partial g}{\partial \epsilon} (\epsilon, y) = \frac{-1}{\epsilon}g(\epsilon, y)  
- \frac{1}{\epsilon} \left ( \frac{\partial r}{ \partial
    \epsilon}(\epsilon, y) y^{\prime}(-r(\epsilon, y)) f^{\prime}(y(-
  r(\epsilon,y))) \right ) $$
$$\frac{\partial g}{\partial y}(\epsilon, x).h = 
\frac{1}{\epsilon} \left [ - h(0) + \left (- x^{\prime}(- r(\epsilon,
    x)) \frac{\partial r}{\partial y}(\epsilon, x) .h + h(-
    r(\epsilon, x)) \right ) f^{\prime}(x(- r(\epsilon, x))) \right ]
$$

This shows that $\frac{\partial g}{\partial \epsilon} : ]0,1[ \times U
\rightarrow \mathbb{R}$ and $\frac{\partial g}{\partial y} : ]0,1[
\times U \rightarrow \mathbb{R}$ are $C^0$, so that $g \in C^1$, and
$H_1$ holds.  Using the hypotheses made on $r$ in the theorem
\ref{thm_hopf}, the formula shows that $H_2$ and $H_3$ also hold.
\end{proof}

\begin{proposition}\label{prop2}
  Suppose that the hypotheses of Proposition \ref{prop1} holds.
  Suppose additionally that $f : \mathbb{R} \rightarrow \mathbb{R}$ is
  $C^2$ and that $ r : \, ]0,1[ \times \left ( U \cap C^2([-M,0],
    \mathbb{R}) \right ) \rightarrow [0,M] \subset \mathbb{R} $ is
  $C^2$.  Suppose that for any $(\epsilon, x) \in ]0,1[ \times U$, the
  function $\frac{\partial^2 r}{\partial y^2} (\epsilon, x)$ has a
  bilinear continuous extension to $C^{1}([-M,0],\mathbb{R}) \times
  C^{1}([-M,0],\mathbb{R}) \rightarrow \mathbb{R}$, which depends
  continuously on $(\epsilon, x) \in ]0,1[ \times (U \cap C^2 )$.  And
  suppose that $ (\epsilon, x, h) \mapsto \frac{\partial^2 r}{\partial
    y^2}(\epsilon, x)(h, \cdot) $ is continuous $ ]0,1[ \times (U \cap
  C^2) \times C^1 \rightarrow \mathcal{L}(C^2, \mathbb{R}) $.

  Then regularity hypotheses $H_4$, $H_5$ and $H_6$ of Eichmann's Hopf
  Bifurcation Theorem are true.
\end{proposition}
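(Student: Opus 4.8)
The plan is to follow the strategy of the proof of Proposition \ref{prop1}, one order higher: I would differentiate the expression already obtained there for $\frac{\partial g}{\partial y}(\epsilon,x).h$ a second time with respect to the function argument, in a direction $k$, and then read the three conclusions $H_4$, $H_5$, $H_6$ directly off the resulting formula for the bilinear form $D^2_x g(\epsilon,x)(h,k)$. Writing $r=r(\epsilon,x)$ and $D_x r.h=\frac{\partial r}{\partial y}(\epsilon,x).h$, the first derivative from Proposition \ref{prop1} reads $\epsilon\,\frac{\partial g}{\partial y}(\epsilon,x).h=-h(0)+\big(-x'(-r)(D_x r.h)+h(-r)\big)f'(x(-r))$, and the entire computation reduces to applying the product and chain rules to this expression.

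The one point requiring care is that in every composition the argument $x$ enters twice: once through the function itself and once through the evaluation point $-r(\epsilon,x)$. Thus $\frac{d}{dx}[x(-r)].k=k(-r)-x'(-r)(D_x r.k)$, $\frac{d}{dx}[x'(-r)].k=k'(-r)-x''(-r)(D_x r.k)$, $\frac{d}{dx}[h(-r)].k=-h'(-r)(D_x r.k)$, $\frac{d}{dx}[f'(x(-r))].k=f''(x(-r))\big(k(-r)-x'(-r)(D_x r.k)\big)$, and $\frac{d}{dx}[D_x r.h].k=\frac{\partial^2 r}{\partial y^2}(\epsilon,x)(h,k)$. Collecting terms, $D^2_x g(\epsilon,x)(h,k)$ is $\frac{1}{\epsilon}$ times a finite sum of products in which the highest-order data that appear are $f''$ (hence the hypothesis $f\in C^2$), $x''(-r)$ (hence the restriction to $x\in U\cap C^2$), the first-order evaluations $h(-r),h'(-r),k(-r),k'(-r)$, the linear functionals $D_x r.h,\ D_x r.k$ (which extend to $C^0$ by the hypothesis of Proposition \ref{prop1}), and the bilinear functional $\frac{\partial^2 r}{\partial y^2}(\epsilon,x)(h,k)$.

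From this explicit expression the three conclusions follow in turn. For $H_4$, each factor is a continuous function of $(\epsilon,x)\in\,]0,1[\,\times(U\cap C^2)$ — using $f\in C^2$, $r\in C^2$, and the assumed continuity of $\frac{\partial^2 r}{\partial y^2}$ — so $g$ is twice continuously differentiable there. For $H_5$, the directions $h$ and $k$ occur only through their values and \emph{first} derivatives (and through the $C^0$- and $C^1\times C^1$-continuous functionals coming from $r$); hence $|D^2_x g(\epsilon,x)(h,k)|\le C(\epsilon,x)\,\|h\|_{C^1}\|k\|_{C^1}$, and the bilinear form extends continuously from $C^2\times C^2$ to $C^1\times C^1$. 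For $H_6$, the joint continuity of $(\epsilon,x,h,k)\mapsto D^2_x g(\epsilon,x)(h,k)$, and the continuity of $(\epsilon,x,h)\mapsto D^2_x g(\epsilon,x)(h,\cdot)$ into $\mathcal{L}(C^2,\mathbb{R})$, are inherited term by term from the corresponding continuity hypotheses; in particular the term built from $\frac{\partial^2 r}{\partial y^2}$ is handled precisely by the assumption that $(\epsilon,x,h)\mapsto\frac{\partial^2 r}{\partial y^2}(\epsilon,x)(h,\cdot)$ is continuous into $\mathcal{L}(C^2,\mathbb{R})$, which is why that hypothesis is stated in this form.

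The main obstacle is bookkeeping rather than conceptual: correctly applying the chain rule to the nested compositions $x'(-r(\epsilon,x))$ and $f'(x(-r(\epsilon,x)))$, so that the $x''$-terms and the $\frac{\partial^2 r}{\partial y^2}$-terms are not overlooked. Once the formula is written out, the only substantive check is that no derivative of $h$ or $k$ higher than the first appears — which is what makes the $C^1\times C^1$ extension in $H_5$ possible — and this is immediate from the list above.
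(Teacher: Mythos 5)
Your computation of $D^2_x g(\epsilon,x)(h,k)$ is correct and matches what the paper does for the $y$-direction: the same chain-rule bookkeeping (the $k'(-r)$, $x''(-r)$, and $\frac{\partial^2 r}{\partial y^2}(h,k)$ terms all appear in the paper's formula for $\frac{\partial}{\partial y}\bigl(\epsilon\frac{\partial g}{\partial y}\bigr)(\epsilon,x).(h,k)$), and your deductions of $H_5$ (only first derivatives of $h,k$ occur, so the bilinear form extends to $C^1\times C^1$) and of $H_6$ (term-by-term continuity, with the $\frac{\partial^2 r}{\partial y^2}$ term handled by the last hypothesis) are exactly the paper's.

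There is, however, a concrete gap in your treatment of $H_4$. That hypothesis asserts that $g:\,]0,1[\,\times(U\cap C^2)\to\mathbb{R}$ is twice continuously differentiable \emph{jointly} in $(\epsilon,x)$, which requires all second-order partial derivatives — $\partial^2_\epsilon g$, the mixed derivatives $\partial_\epsilon\partial_y g$ and $\partial_y\partial_\epsilon g$, and $\partial^2_y g$ — to exist and be continuous. You only compute $D^2_x g$ and then assert $H_4$ from the continuity of its factors; that establishes the $y$-part but says nothing about the $\epsilon$-derivatives. The paper devotes most of its proof to exactly the pieces you skipped: it writes out explicit formulas for $\frac{\partial}{\partial\epsilon}\bigl(\epsilon\frac{\partial g}{\partial\epsilon}+g\bigr)$, $\frac{\partial}{\partial y}\bigl(\epsilon\frac{\partial g}{\partial\epsilon}+g\bigr).h$ and $\frac{\partial}{\partial\epsilon}\bigl(\epsilon\frac{\partial g}{\partial y}.h\bigr)$, which bring in $\frac{\partial^2 r}{\partial\epsilon^2}$ and $\frac{\partial^2 r}{\partial y\,\partial\epsilon}$ and use the hypothesis that $r$ is $C^2$ on $]0,1[\,\times(U\cap C^2)$ in both variables. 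These computations are of the same routine flavor as the one you did, so the gap is fillable, but as written your argument does not prove $H_4$.
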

\begin{proof}
  This proposition is a consequence of the following formulas for second
  order derivatives of function $g$.
$$- \frac{\partial }{\partial \epsilon} \left ( \epsilon \frac{\partial g}{ \partial \epsilon} + g \right )(\epsilon ,x) = 
\frac{\partial^2 r}{\partial \epsilon^2}(\epsilon ,x) x^\prime(-
r(\epsilon ,x)) f^\prime(x(- r(\epsilon ,x))) $$
$$ 
- x^{\prime \prime}(-r(\epsilon ,x)) \left ( \frac{\partial
    r}{\partial \epsilon}(\epsilon ,x) \right )^2
f^{\prime}(x(-r(\epsilon ,x))) - \left ( x^\prime(- r(\epsilon ,x))
  \frac{\partial r}{\partial \epsilon}(\epsilon ,x) \right )^2
f^{\prime \prime}(x(-r(\epsilon ,x)))
 $$
and 
$$ - \frac{\partial}{\partial y}  \left ( \epsilon \frac{\partial g}{\partial \epsilon} + g \right ) (\epsilon,x).h = 
\left ( \frac{\partial^2 r}{\partial y \partial \epsilon}(\epsilon, x).h \right )
x^{\prime}(- r(\epsilon,x)) f^\prime(x(-r(\epsilon,x)))$$
$$- \frac{\partial r}{\partial \epsilon}(\epsilon,x) x^{\prime \prime}(- r(\epsilon,x)) 
\left ( \frac{\partial r}{\partial y}(\epsilon,x).h \right ) f^\prime(x(-r(\epsilon,x)))$$
$$+\frac{\partial r}{\partial \epsilon}(\epsilon,x) x^{\prime}(- r(\epsilon,x)) f^{\prime \prime}(x(-r(\epsilon,x))) 
\left [ - x^\prime(-r(\epsilon,x)) \left (\frac{\partial r}{\partial y}(\epsilon,x).h \right ) + h(- r(\epsilon,x)) \right ] $$
and 
$$ \frac{\partial}{\partial \epsilon} \left 
( \epsilon \frac{\partial g}{\partial y}(\epsilon,x)\cdot h  \right ) = 
 f^\prime(x(-r(\epsilon,x))) \left [ x^{\prime \prime}(- r(\epsilon,x)) \frac{\partial r}{\partial \epsilon}(\epsilon,x) 
\left ( \frac{\partial r }{\partial y}(\epsilon,x).h \right )\right ] $$ 
$$ + f^\prime(x(-r(\epsilon,x))) \left [
- x^\prime(-r(\epsilon,x))\frac{\partial^2 r}{\partial \epsilon \partial y}(\epsilon,x).h
- h^\prime(-r(\epsilon,x)) \frac{\partial r}{\partial \epsilon}(\epsilon,x) \right ]  $$
$$- \left [ -x^\prime(-r(\epsilon,x)) \frac{\partial r}{\partial y}(\epsilon,x).h + h(-r(\epsilon,x))  \right ] 
f^{\prime \prime}(x(-r(\epsilon,x))) x^\prime(-r(\epsilon,x))
\frac{\partial r}{\partial \epsilon}(\epsilon,x)$$ and
$$\frac{\partial}{\partial y} \left ( \epsilon \frac{\partial g}{ \partial y} \right )(\epsilon,x).(h,k) = $$
$$\left (-  \frac{\partial r}{\partial y}(\epsilon, x) .h + h(- r(\epsilon, x))  \right ) 
\left ( -\frac{\partial r}{\partial y}(\epsilon,x).k  + k(- r(\epsilon,x)) \right ) x^{\prime}(- r(\epsilon, x)) f^{\prime \prime}(x(-r(\epsilon,x))) $$
$$+ f^{\prime}(x(-r(\epsilon,x)))
\left [  
x^{\prime \prime}(- r(\epsilon,x)) \left ( \frac{\partial r}{\partial y}(\epsilon,x).k \right ) 
\left ( \frac{\partial r}{\partial y}(\epsilon,x).h \right )  \right ] $$
$$+ f^{\prime}(x(-r(\epsilon,x)))
\left [  
- x^{\prime}(-r(\epsilon,x)) \frac{\partial^{2} r}{\partial y^2}(\epsilon,x).(h,k)
 -h^{\prime}(-r(\epsilon,x)) \frac{\partial r}{\partial y}(\epsilon,x).k \right ] $$

If $r$ and $f$ are $C^2$, then one can check that $g : ]0,1[
\times\left ( U \cap C^2([-M,0], \mathbb{R}) \right ) \rightarrow
\mathbb{R} $ is $C^2$ and this is $H_4$.  If $\frac{\partial^2
  r}{\partial y^2} (\epsilon, x)$ has a continuous extension to
$C^{1}([-M,0],\mathbb{R}) \times C^{1}([-M,0],\mathbb{R}) \rightarrow
\mathbb{R}$, then so does $\frac{\partial^2 g}{\partial y^2}
(\epsilon, x)$ and this is $H_5$.  Since the extension of
$\frac{\partial r}{\partial y}(\epsilon, x)$ and $\frac{\partial^2
  r}{\partial y^2}(\epsilon, x)$ are continuous in $(\epsilon, x)$ the
first part of $H_6$ is satisfied.  Since $r : ]0,1[ \times U
\rightarrow \mathbb{R}$ is $C^1$, and due to the last hypothesis on
$\frac{\partial^2 r}{\partial y^2}$ in the proposition, the last
requirement of $H_6$ holds too.
\end{proof}

The hypotheses of theorem
 \ref{thm_hopf}: $f$ is $C^2(\mathbb{R}, \mathbb{R})$, $r : ]0,1[ \times
C^0([-M,0], \mathbb{R}) \rightarrow \mathbb{R}$ is $C^1$, and $r :
]0,1[ \times C^1([-M,0], \mathbb{R}) \rightarrow \mathbb{R}$ is $C^2$,
imply that  the regularity hypotheses of
Proposition \ref{prop1} and
Proposition \ref{prop2} are verified. \\

We now turn to the spectral hypotheses of Eichmann's Hopf-bifurcation
theorem.  We consider the equilibrium $x^{*}=0$, which satisfies
$g(\epsilon, 0) = 0$ for all $\epsilon \in ]0,1[$, and the linearized
equation at $x^*$ : $ y'(t) = \frac{\partial g}{\partial y}(\epsilon,
0) y_t \; \textnormal{ with } y_0 \in C^1([-M,0], \mathbb{R}) $, ie
$$ \epsilon y'(t) = - y(t) + f^\prime(0) y(t- r_0)  $$ where $r_0 = r(0)= r(x^*)=1$.
The characteristic equation is
$$ \left \{
\begin{array}{ccc}
1 + \epsilon \alpha &=& f^\prime(0) e^{-\alpha} \cos(\beta) \\
   \epsilon \beta &=& - f^\prime(0) e^{- \alpha} \sin{\beta},
 \end{array} \right . $$
This is the same characteristic equation for the constant-delay equation, and
a standard argument shows that there exists  a sequence 
$\epsilon_k \underset{k \rightarrow + \infty}{\longrightarrow} 0$ such that for each $\ep=\ep_k$ 
the characteristic equation has a single pair of solutions on the imaginary axis
$\lambda =\pm i \beta_k$, $\bt_k>0$.
This implies in particular that $(L_3)$ of \cite{eichmann} is satisfied. 

To check that for each $k$ the eigenvalue 
$\lambda_k$ can be tracked in a neighborhood of $\epsilon = \epsilon_k$, 
we use an implicit function theorem.
 Considering $G : ]0,1[ \times \mathbb{R} \times \mathbb{R}^2$ defined by 
$G(\epsilon, \alpha, \beta)=( 1 + \epsilon \alpha - f^\prime(0) e^{-\alpha} \cos(\beta)  ;
 \epsilon \beta + f^\prime(0) e^{-\alpha} \sin(\beta)   )  $, 
we have $G(\epsilon, \alpha, \beta)= 0$ if and only if 
$\lambda = \alpha + i \beta$ is a characteristic root. 
We have $G(\epsilon_k, 0, \beta_k)= 0$ (for any $k$), 
and the derivative of $G$ with respect to $\alpha$ and
$\beta$ at $(\epsilon_k, 0, \beta_k)$ is 
$$ \left (
\begin{array}{cc}
1 + \epsilon_k + f^\prime(0) \cos(\beta_k) &  f^\prime(0) \sin(\beta_k)   \\
- f^\prime(0)\sin(\beta_k) &  \epsilon_k + f^\prime(0) \cos(\beta_k)
\end{array} \right ) 
= \left ( \begin{array}{cc}
2 + \epsilon_k  &  f^\prime(0) \sin(\beta_k)   \\
- f^\prime(0)\sin(\beta_k) & 1+ \epsilon_k  
\end{array} \right )  $$ 
which has a positive determinant and is invertible. 
Thus (for any $k$) there is an open interval $I_k$ containing $\epsilon_k$ and 
a $C^1$ function
$\nu_k : I_k \rightarrow \mathbb{C}$ such that for
 all $\epsilon \in I_k$, $\nu_k(\epsilon)$ is 
a characteristic root and $\nu_k(\epsilon_k)= \lambda_k$. 

Furthermore, computing the $\epsilon$ derivative of $\nu_k$ at 
$\epsilon=\epsilon_k$ and  using the relations for 
$\lambda_k$, one finds 
$$\left \{ \begin{array}{ccc}
(1 + \epsilon_k ) \alpha^\prime(\epsilon_k) 
& = & \epsilon_k \beta_k \beta_k^\prime(\epsilon_k) \\
(1 + \epsilon_k ) \beta_k^{\prime} &=&
 - (1+ \epsilon_k \alpha_k^{\prime}(\epsilon_k)) \beta_k,
\end{array} \right . $$
which gives either $\beta_k^{\prime}(\epsilon_k)= 0$ 
and  $\alpha_k^{\prime}(\epsilon_k) = \frac{- 1}{\epsilon_k}$, or 
$\beta_k^{\prime}(\epsilon_k) \neq 0$ and
$ \alpha_k^{\prime}(\epsilon_k) = \frac{\epsilon_k \beta_k \beta_k^\prime(\epsilon_k)}{1 + \epsilon_k} $,
so that one always has $\alpha_k^\prime(\epsilon_k) \neq 0$. 
This means both $(L_1)$ and $(L_2)$ of Eichmann's spectral hypotheses
are satisfied,  which finishes  the proof of theorem \ref{thm_hopf}.

\section{A numerical method for solving transition layer equations.}
\label{app_tlo}

In this section we present a method for solving numerically
the transition layer equations (\ref{eqTLE}),
(\ref{eqTLEneg}), (\ref{eqTLEcst}) and (\ref{eqTLElambda}).
This section  is divided into two parts: negative and positive feedback.
Some  details are only provided for the positive feedback case, 
since for the negative feedback case they are similar.

\subsection{Positive feedback}
\label{sub_nummeth}

Existence of transition layer solutions is usually proven with the
help of a fixed point theorem.  Suppose that $f$ is smooth, has three
fixed points $x=-a$, $x=0$ and $x=b$, and that $f$ is increasing on
the interval $[-a,b]$ (positive feedback, with no symmetry
hypothesis).  Consider the set $\mathcal{C}^+$ of functions $\phi \in
C^1(\mathbb{R}, \mathbb{R})$ such that $\phi(0)=0$, $\underset{x
  \rightarrow - \infty}{\lim} \phi(x) = -a$, $\underset{x \rightarrow
  + \infty}{\lim} \phi(x) = b$, $\Vert \phi^\prime \Vert_{L^\infty}
\leq \underset{x \in [-a;b]}{ \sup} \vert f(x) \vert = \max \{ a, b
\}$, and $\phi$ is strictly increasing on $\mathbb{R}$.  On that space
we define the operator $\mathcal T$ that to $\phi \in \mathcal{C}^+$
associates the function $\mathcal T \phi = \psi$ given by
$$\psi(t) = e^{-t} \int_{- \infty}^{t} e^s f( \phi [ s + \rho  - R (\phi(s)) ] )ds,$$ 
where $\rho \in [0; + \infty[$ is the only constant such that 
$\psi(0) = \int_{-\infty}^{0} e^s f( \phi [ s + \rho - R(\phi(s))] )ds = 0 $.
It can be shown that $\psi\in\mathcal{C}^+$.
Then $\psi= \mathcal T \phi$ is the unique solution of equation 
\begin{equation}\label{eqpsi} \dot \psi(t)  = - \psi(t)
  + f( \phi [ t - R(\phi(t)) + \rho  ] ) \end{equation}
such that $\underset{x \rightarrow - \infty}{\lim} \psi(x) = -a$,
$\underset{x \rightarrow + \infty}{\lim} \psi(x) = b$ and $\psi(0)=0$.
So an increasing solution of the transition layer equation (\ref{eqTLE})
(case $\eta(0)=0$ and $\eta^\prime(0)=1$), 
is a fixed point of the operator $\mathcal T$ in the set $\mathcal{C}^+$.
In the following we show  how to  numerically solve
the fixed point problem 
$\mathcal T \phi^+ = \phi^+$. A solution to the problem 
$\mathcal T \phi^- = \phi^-$ is obtained in the same way.

Let $\phi$ be  a smooth increasing function  such that $\underset{t
  \rightarrow - \infty}{\lim} \phi(t) = -a $ and $\underset{t
  \rightarrow + \infty}{\lim} \phi(t) = b $. Then,   for any $\gamma \in
\mathbb R$, the function
$$\psi_\gamma (t) = e^{-t} \int_{- \infty}^{t} e^s f( \phi [ s + \gamma  - R (\phi(s)) ] )ds$$ 
is also smooth, increasing, and  satisfies $\underset{t
  \rightarrow - \infty}{\lim} \psi(t) = -a $ and $\underset{t
  \rightarrow + \infty}{\lim} \psi(t) = b $. Since, for any $t
\in \mathbb R$, the map $\gamma \mapsto \psi_\gamma(t)$ is increasing,
there is a unique $\gamma=\rho$ such that
$\psi_{\rho}(0)=0$ and so $\psi_{\rho} = \mathcal T \phi$.
A numerical approximation to the map $\phi\to \psi_\rho$ is the following.
Let  $L>0$ and $\phi^L$ be the function that coincides with 
$\phi$ on the interval $[-L,L]$ and such that $\phi^L(t)=-a$ for 
$t<-L$ and $\phi^L(t)=b$ for 
$t>L$. Clearly $\sup_{t\in\mathbb R}|\phi^L(t)-\phi(t)|$ can be made arbitrarily small
if $L$ is chosen sufficiently large. So, we fix $L>0$ and choose an initial 
$\phi_0$ that satisfies  $\phi_0(t)=-a$ for 
$t<-L$ and $\phi_0(t)=b$ for 
$t>L$. Then for a given $\gamma$ we use a first order Euler method to solve
the equation 
$$ \psi_\gamma^\prime(t) = - \psi_\gamma(t) + 
f \left ( \phi \left [ t + \gamma - R(\phi_0(t)) \right ]  \right ), $$
with the initial condition $\psi_\gamma(-L) = -a$ and time step $dt$.
Knowing that $\gamma > \rho$ is equivalent to $\psi_\gamma(0) >0$, 
we can use a shooting method to find $\gamma_1$ and $\gamma_2$   
such that $ \rho - dt \leq \gamma_1 \leq \rho \leq \gamma_2 \leq \rho + dt $.
Defining $\phi_1(t)=\psi_{\gamma_1}(t)$ for $t\le L$ and $\phi_1(t)=b$ for $x>L$ 
we obtain the  approximation  $\mathcal T \phi_0 \approx  \phi_1$.
This procedure can be iterated $\phi_{n+1} = \mathcal T \phi_n$, $n=1,2,\ldots$ 
hoping  that it converges to a fixed point $\phi^+$.
The convergence of this sequence was  numerically verified, typically 
iterating $\phi_{n+1} = \mathcal T \phi_n$ up to $n=20$ and checking that 
$$
 \vert \phi_n(t) - \phi_{n-1}(t) \vert \leq dt.
$$
A sample of our  results are shown  
in table \ref{Fig_tableau_rho} (a)  (section \ref{subsub_pf}) and  in
figure \ref{Fig_rho_1}.

In table \ref{Fig_tableau_rho} (a) the constants $\rho^\pm$ were
computed for  $f(x)=\frac12 \arctan(5x)$, $L=100.0$, and $dt=0.001$.
The initial condition used to obtain table \ref{Fig_tableau_rho} (a)
was  $\phi_0^+(t)= \frac{ab(1-e^{-\frac23 t})}{a+b e^{-\frac23 t}} $
 for the increasing transition layer solution, and 
$\phi^-_0(t) = \frac{ab(1-e^{\frac23 t})}{a+b e^{\frac23 t}}$ 
for the decreasing transition layer solutions.

In figure \ref{Fig_rho_1} (a) several iterates $\phi_n$ are shown to
converge to a limit profile $\phi^-$ for $f(x)=\frac12 \arctan(5x)$,
with $R(x)=cos(x)$, and initial condition $\phi_0(t) = ab \frac{1 -
  e^{t}}{a + be^t}$. 
In that case the feedback $f$ is symmetric, 
so that the operator $\mathcal T$ is also symmetric, 
and if we consider an increasing initial profile $\tilde \phi_0 = - \phi_0$, 
the corresponding sequence is $\tilde \phi_n = \mathcal T^n (-\phi_0) = - \mathcal T \phi_0$. 
In particular, figure \ref{Fig_rho_1} illustrates the convergence and shapes
 of iterates for both increasing and decreasing profiles. 
 In figure \ref{Fig_rho_1} (b) the limit profile
$ \phi_{ 10 } (t) = \mathcal T^{ 10 } \phi_0 (t) \approx \phi^-(t)$ is
shown for $R(x)= 2x$, $R(x)= -x^2$, and $R(x)= \frac14 x(1+x)$, using 
the initial condition $\phi_0(t) = ab \frac{1 - e^{t}}{a + be^t}$.
For $R(x)=x$, the numerical convergence of the sequence $\phi_{n+1} =
\mathcal T \phi_n$ was successfully tested for the following initial
functions:
$\phi_0(t) = - \frac{t}{ \vert t \vert}$, and 
$\phi_0(t) = ab \frac{1 - e^{t}}{a + be^t} (1+ 0.4 \cos(t))$. 
All these  results also hold for  a non-symmetric positive feedback as well.

\begin{figure}[th!]
\begin{center}
\includegraphics[width=7.5cm]{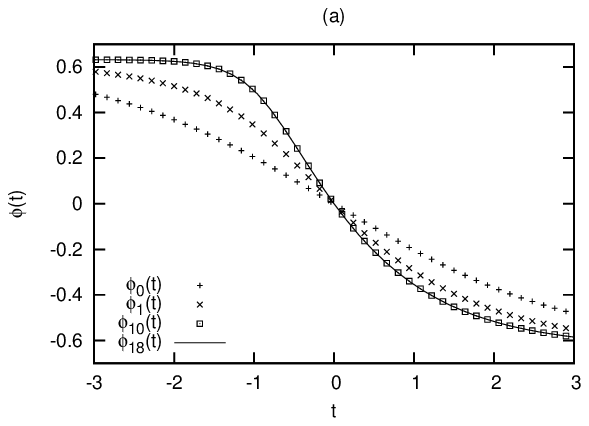}
\includegraphics[width=7.5cm]{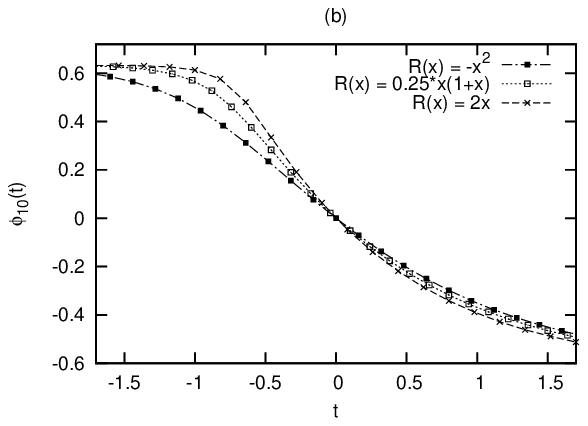}
\end{center}
\vspace{-1.0cm}
\caption{\small{Decreasing transition layer solutions of equation (\ref{eqTLE})
($\eta(0)=0$ with $\eta^\prime(0)=1$) 
with  $f(x)=\frac12 \arctan(5x)$
(Positive feedback case). 
 In (a): $R(x)=cos(x)$  and $\phi_0(t) = ab \frac{1 - e^{t}}{a + be^t}$. 
In (c): $ \phi_{ 10 } (t) = \mathcal T^{ 10 } \phi_0 (t)$ is displayed 
(with $\phi_0(t) = ab \frac{1 - e^{t}}{a + be^t}$), 
for  functions $R(x)= 2x$, $R(x)= -x^2$, and $R(x)= \frac14 x(1+x)$.  
}}
\label{Fig_rho_1}
\end{figure}

For DDEs with state dependent delays, such as transition layer equations (\ref{eqTLE}), 
the existence of solutions $\phi$ to the Cauchy problem is known under classical hypotheses, 
in particular on the delay function $R$, 
that ensures that $t \mapsto t - R(\phi(t))$ is not decreasing 
(see \cite{Hartung_2006} for example).
We mention that in several numerical examples, when
$R$ is too large, the maps $t \mapsto t - R(\phi_n(t))$ are not monotone, 
the iterative sequences $\phi_n=\mathcal T^n \phi_0$ does not converge, 
and transition layer solutions seem not to exist (see section \ref{subsection_tle_alpha>1})

 In the case $\eta(0)=\eta^\prime(0)=0$, the transition layer equation
 (\ref{eqTLElambda}) is associated to the operator 
$$\mathcal T_\lambda \phi(t) = e^{-t} \int_{- \infty}^{t} e^s f( \phi [ s + \rho^+  - \lambda R (\phi(s)) ] )ds,  $$ 
and the same numerical method presented above can be used to compute
its fixed point.  For instance, figure \ref{Fig_rho_pm_alpha>1}
(section \ref{subsubsection_tle_alpha>1-pos}) shows the constants
$\rho^\pm$ that were computed using various choices of $R$ and:
$f(x)=\frac12 \arctan(5x)$, $L=100.0$, $dt=0.001$, $n=20$, and the
initial condition $\phi_0^+(t)=\frac{ab(1-e^{-\frac23 t})}{a+b
  e^{-\frac23 t}}$ for the increasing transition layer solution and
$\phi^-_0(t) = \frac{ab(1-e^{\frac23 t})}{a+b e^{\frac23 t}}$ for the
decreasing transition layer.

Finally, in figure \ref{Fig_rho_3} (in section \ref{subsec_meta}),
illustrating that metastability may be induced by state dependent
delay in the case of non-symmetric positive feedback (that does not
exhibit metastability for constant delay),
the values of $\rho^\pm$ were computed using $L=100.0$, $dt=0.001$,
$n=20$, and the initial conditions $\phi_0^+(t)=
\frac{ab(1-e^{-\frac23 t})}{a+b e^{-\frac23 t}}$, $\phi^-_0(t) =
\frac{ab(1-e^{\frac23 t})}{a+b e^{\frac23 t}}$.

\subsection{Negative feedback}
\label{sub_nummeth_neg}

For a negative feedback $f$, one can repeat the procedure above 
and define  the  operator $\mathcal T$ 
that has as fixed points the solutions to the  transition layer equation
 (\ref{eqTLEneg}), namely 
\begin{equation*}
\begin{array}{rcl}
\phi^+ (t)  &=& \mathcal T \phi^- (t) = e^{-t} \int_{- \infty}^{t} e^s f( \phi^- [ s + \rho^+  - R (\phi^-(s)) ] )ds \\
\phi^- (t)  &=& \mathcal T \phi^+ (t) = e^{-t} \int_{- \infty}^{t} e^s f( \phi^+ [ s + \rho^-  - R (\phi^+(s)) ] )ds 
\end{array}
\end{equation*} 
where $\phi^+$ (resp. $\phi^-$) are increasing (resp. decreasing)
smooth functions with $\underset{t \rightarrow - \infty}{\lim} \phi^+
= -a $, $\underset{t \rightarrow + \infty}{\lim} \phi^+ = b $ and
$\underset{t \rightarrow - \infty}{\lim} \phi^- = b $, $\underset{t
  \rightarrow + \infty}{\lim} \phi^- = -a$, and $\rho^\pm$ are the
only constants such that $\phi^+(0)=\phi^-(0)=0$.  Since the function
$f$ is decreasing, the operator $\mathcal T$ now maps an increasing
function to a decreasing one and vice versa.  As in the case of
positive feedback, if the function $\phi$ is strictly monotone the
function $\gamma \mapsto \Psi_\gamma (t)$ is strictly monotone too, so
that the constants $\rho^\pm$ above are well defined and can be
computed by a shooting method.  We choose a smooth increasing initial
function $\phi_0$ with $\underset{t \rightarrow - \infty}{\lim} \phi^+
= -a $, $\underset{t \rightarrow + \infty}{\lim} \phi^+ = b $ and the
subsequences $\phi_{2n}$ and $\rho_{2n}$ (resp. $\phi_{2n+1}$ and
$\rho_{2n+1}$) converge to the transition layer solution $\phi^+$ and
the constant $\rho^+$ (resp. $\phi^-$ and $\rho^-$ ).  Numerically, as
in the positive feedback case, we use discretization of step $dt$ on
an interval $[-L, L]$, the functions $\Psi_\gamma$ are computed using
a first order Euler scheme of step $dt$, the constant $\rho$ are
approximated with a precision $dt$, and the convergence after $n$
iterations of $\mathcal T$ is checked similarly :
\begin{eqnarray*}
 \vert \phi_n(t) - \phi_{n-2}(t) \vert &\leq& dt,\\
\vert \phi_{I-1}(t) - \phi_{I-3}(t) \vert &\leq& dt,
\end{eqnarray*}
In table \ref{Fig_tableau_rho_NF} (section \ref{subsub_nf})
 the constants $\rho^\pm$ were computed using $L=100.0$, $dt=0.001$,  $n=40$, 
and  $\phi^+_0(t) = \frac{ab(1-e^{-\frac23 t})}{a+b e^{-\frac23 t}} $.

\medskip 
\medskip 
{\bf  ACKNOWLEDGMENTS}\\
\noindent
The authors thank Denis Mestivier for his help in handling RADAR-V codes.

\end{document}